\documentclass{article}
\usepackage{graphicx} 

\usepackage[english]{babel}
\usepackage[utf8]{inputenc}
\usepackage{amsmath,amsfonts,amssymb,amsthm}
\usepackage{tabto}
\usepackage[colorinlistoftodos]{todonotes}
\usepackage{blindtext}
\usepackage{subcaption}
\usepackage{caption}
\usepackage[nodisplayskipstretch]{setspace}
\usepackage{tikz-cd} 
\usepackage{tensor}
\usepackage{mathrsfs}
\usepackage{hyperref}
\usepackage[style=ext-authoryear-comp,maxcitenames=2,uniquename=false,backend=biber,uniquelist=false,articlein=false]{biblatex}
\usepackage[usenames,dvipsnames]{xcolor}
\usepackage{enumitem}

\DeclareCiteCommand{\citeyear}
    {}
    {\bibhyperref{\printdate}}
    {\multicitedelim}
    {}

\hypersetup{
	colorlinks=true,         
	linkcolor=Black,          
	citecolor=MidnightBlue,
	urlcolor=MidnightBlue            
 }

\newtheorem{thm}{Theorem}

\newtheorem{prop}[thm]{Proposition}

\newtheorem{defn}[thm]{Definition}

\makeatletter
\renewcommand\section{\@startsection{section}{1}{\z@}{-3.25ex plus -1ex minus -.2ex}{1.5ex plus .2ex}{\normalsize\bf}}
\renewcommand\subsection{\@startsection{subsection}{2}{\z@}{-3.25ex plus -1ex minus -.2ex}{1.5ex plus .2ex}{\normalsize\bf}}
\renewcommand\subsubsection{\@startsection{subsubsection}{3}{\z@}{-3.25ex plus -1ex minus -.2ex}{1.5ex plus .2ex}{\normalsize\bf}}
\makeatother

\title{Why Hadamard states?}
\author{Eleanor March\footnote{Department of Logic and Philosophy of Science, University of California, Irvine, CA. eleanor.march@uci.edu}}
\date{}

\begin{document}

\maketitle

\begin{abstract}
    In quantum field theory on curved spacetime, and in locally-covariant quantum field theory, the Hadamard condition is often presented as a necessary condition on `physically reasonable' states of the quantum field, and plays a central role in many theoretical and foundational applications---ranging from proofs of the renormalizability of Wick polynomials to derivations of the Hawking temperature. Yet despite this, the philosophical and foundational underpinnings of the Hadamard condition remain murky. I critically discuss existing motivations for the Hadamard condition in the literature, before arguing in favour of an alternative justification for the Hadamard condition, according to which it is best understood as a necessary and sufficient condition for the existence of a well-defined operator product on a sufficiently large space of observables of the quantum field, satisfying a variety of further conditions (thus proving a converse to a result which was already discussed in this context). This clarifies the role and status of the Hadamard condition, including its relationship to the equivalence principle, to well-definedness of physical quantities such as Wick polynomials and the expectation value of the stress--energy operator, and the sense in which Hadamard states are `vacuum-like'. 
\end{abstract}

\section{Introduction}\label{sec:intro}

In quantum field theory on curved spacetime, and in locally-covariant quantum field theory, the Hadamard condition is often presented as a necessary condition on `physically reasonable', `vacuum-like' states of the quantum field. Indeed, there are well-known arguments to the effect that it is only for this class of states that certain important physical quantities, such as the expectation value of the stress--energy operator, and Wick and time-ordered products, are well-defined \parencite{Wald1977,Wald1978,Wald1994,BrunettiFredenhagen2000}. As such, many foundational results---ranging from derivations of Hawking radiation and the Hawking temperature \parencite{JanssenVerch2023,KayWald1991}, to quantum energy inequalities \parencite{Fewster2000,FewsterVerch2002,Fewster2017}, to the renormalizability of Wick and time-ordered products \parencite{HollandsWald2001,HollandsWald2002}---rely on the Hadamard condition.

Yet despite this, the philosophical and foundational underpinnings of the Hadamard condition remain murky. One `standard' motivation for the Hadamard condition found in the literature (see, e.g., \textcite{KayWald1991,Amir-HomayoonOttewill,FewsterVerch2013,KhavkineMoretti2015}) goes via appeal to the a version of the strong equivalence principle---according to which the singularity structure of Hadamard states should `locally approximate' that of the Minkowski vacuum. But it is unclear what this kind of motivation has to do with, e.g., the well-definedness of Wick or time-ordered products or the expectation value stress--energy operator. And it is doubly unclear what this kind of version of the equivalence principle has to do with the classical equivalence principle, which states that the dynamics of matter fields in general relativity should `locally approximate' those in Minkowski spacetime (see, e.g., \textcite{March2025BPhil,Readetal2018,FletcherWeatherallpart2}). Another `standard' motivation for the Hadamard condition (see, e.g., \textcite{Fullingetal1978,KayWald1991,JanssenVerch2023,FewsterVerch2013}) appeals directly to the role of Hadamard states in defining these kind of quantities, but it is not clear how this motivation is supposed to underwrite the intended interpretation of the Hadamard condition as delimiting the class of `physically reasonable', `vacuum-like' states of the quantum field. 

In this paper, I argue in favour of an alternative motivation for the Hadamard condition---as a necessary and sufficient condition for the existence of a well-defined operator product on a sufficiently large class of observables of the quantum field, satisfying certain conditions. Some of these conditions concern, e.g., when the quantum field theory defined by this operator product will be a deformation quantization of the classical field theory with which it is associated, and the possibility of restricting the operator product to on-shell observables. The remainder of these conditions have a straightforward connection to well-definedness of quantities like the expectation value of the stress--energy operator and Wick polynomials, and clarify the sense in which Hadamard states can be thought of as `vacuum-like'. 

As such, the structure of this paper will be as follows. In \S\ref{sec:prelims}, I give a self-contained presentation of some relevant technical background on Green hyperbolic field theories, microlocal analysis, and the construction of algebras of observables for a free Green hyperbolic field theory.  \S\ref{sec:hadamard} provides a general definition of the Hadamard condition (drawing on recent work by \textcite{Fewster2025Hadamard}), and a brief discussion of its relationship to other formulations of the Hadamard condition in the literature. I then, in \S\ref{sec:why} discuss in detail the two `standard' motivations for the Hadamard condition in the literature mentioned above, and raise some preliminary worries about both strategies. This takes us on to \S\ref{sec:operatorstohadamard}, where I set out and discuss my preferred answer to the question `why Hadamard states?', and show how it avoids the worries raised in \S\ref{sec:operatorstohadamard}. Finally, in \S\ref{sec:equivalenceagain}, I return to the equivalence principle motivation for the Hadamard condition discussed in \S\ref{sec:why}, and show that, whilst this motivation can be fleshed out in a mathematically-precise way, it is still not convincing as an argument for the Hadamard condition as it stands. I close in \S\ref{sec:conclusion}.

\section{Preliminaries}\label{sec:prelims}

\subsection{Green hyperbolic Lagrangian field theories}

Our basic starting point for the construction of a quantum field theory will be a Green hyperbolic Lagrangian field theory. I will begin by recalling some definitions.
\begin{defn}[Lagrangian]
    Let $V\rightarrow E\overset{\pi}{\rightarrow}M$ be a vector bundle and let $n=\mathrm{dim}(M)$. A $k$th-order Lagrangian on $E$ is a smooth bundle morphism $L:J^kE\rightarrow \wedge^nT^*M$ covering the identity on $M$, where $J^kE$ is the $k$th jet bundle of $E$, and $\wedge^nT^*M$ is the bundle of densities on $M$.
\end{defn}
A pair $(E,L)$ will be called a ($k$th-order) Lagrangian field theory. Often, we will want to consider Lagrangian field theories which are defined over all (appropriate) relativistic spacetimes in a `uniform' way.\footnote{Here, and throughout, all indices are abstract unless stated otherwise. Lowercase Latin indices are used to denote vectors, tensors, etc.\ valued in the fibre of $E$, indices for vectors, tensors, etc.\ valued in the (co)tangent bundle of $M$ are generally suppressed, but where needed, lowercase Green indices will be used for those.} To this end, let $\mathcal{G}_n^+$ be the category whose objects are smooth, globally hyperbolic, time-oriented Lorentzian $n$-manifolds, and whose arrows are smooth orientation-preserving, causal, and isometric embeddings, and let $\mathcal{FB}$ be the category whose objects are smooth fibre bundles and whose arrows are smooth bundle morphisms. A natural bundle (over $\mathcal{G}_n^+$) is a functor $\mathscr{B}:\mathcal{G}_n^+\rightarrow\mathcal{FB}$ such that (i) for any object $(M,g)$ in $\mathcal{G}^+_n$, $\mathscr{B}(M,g)$ is a bundle with base space $M$, and (ii) for any morphism $\varphi:(M,g)\rightarrow (M',g')$ in $\mathcal{G}^+_n$, $\mathscr{B}\varphi$ is a smooth bundle morphism covering $\varphi$.\footnote{See \textcite{MarchWeatherallnaturaltheories,March2025BPhil,MarchWeatherallcovgauge} for further philosophical discussion of natural bundles.} We then define:
\begin{defn}[Locally-covariant Lagrangian]
    Let $\mathscr{E}:\mathcal{G}^+_n\rightarrow \mathcal{FB}$ be a natural vector bundle. A $k$th-order locally-covariant Lagrangian on $\mathscr{E}$ is a natural transformation $\mathscr{L}:J^k\mathscr{E}\rightarrow \wedge^nT^*M$.
\end{defn}
A pair $(\mathscr{E},\mathscr{L})$ will be called a ($k$th-order) locally-covariant Lagrangian field theory. 

Let $(E,L)$ be a Lagrangian field theory. We say that it is Green hyperbolic iff the Euler-Lagrange equations for $L$ are Green hyperbolic. To this end, we recall that if $(E,L)$ is a Lagrangian field theory, the Euler-Lagrange operator for $L$ is a smooth bundle morphism $P:J^{2k}E\rightarrow E^*\otimes\wedge^nT^*M$ covering the identity on $M$ \parencite[\S 49]{Kolar+etal}.\footnote{More generally, if $B\overset{\pi}{\rightarrow}M$ is a smooth fibre bundle (not necessarily a vector bundle) and $L:J^kB\rightarrow \wedge^nT^*M$ is a $k$th-order Lagrangian on $B$, the Euler-Lagrange operator for $L$ is a smooth bundle morphism $P:J^{2k}B\rightarrow V^*B\otimes\wedge^nT^*M$ covering the identity on $M$. In the case where  $(\mathscr{E},\mathscr{L})$ is a locally-covariant Lagrangian field theory, this extends to a natural transformation $\mathscr{P}:J^{2k}\mathscr{E}\rightarrow \mathscr{E}^*\otimes\wedge^nT^*$} In what follows, $P$ will be our primary object of interest: the fact that $P$ is obtained as the Euler-Lagrange operator for some Lagrangian will not matter except insofar as its codomain is required to be $E^*\otimes\wedge^nT^*M$. 

Now let $(M,g)$ be a time-oriented, globally hyperbolic Lorentzian manifold. Given a vector bundle $E\overset{\pi}{\rightarrow}(M,g)$ let $\Gamma(E)$ denote the space of sections of $E$, let $\Gamma_\mathrm{c}(E)$ denote the space of sections of $E$ with compact support, let $\Gamma_{\mathrm{sc},\pm}(E)$ denote the spaces of sections of $E$ with compactly-sourced future/past support respectively (i.e., with support in the future/past closed cone of some compact subset of $M$), and $\Gamma_{\mathrm{sc}}(E)$ the space of sections of $E$ with compactly-sourced support. We have the following definition:
\begin{defn}[Advanced and retarded Green functions]\label{def:advretgreen}
    Let $E\overset{\pi}{\rightarrow}(M,g)$ and $P$ be as above, and suppose that $P$ is linear. A linear map $G_{P,\pm}:\Gamma_\mathrm{c}(E^*\otimes\wedge^nT^*M)\rightarrow\Gamma_{\mathrm{c},\pm}(E)$ is called an advanced/retarded Green function for $P$ iff
    \begin{enumerate}
        \item For all $\Phi\in\Gamma_{\mathrm{c}}(E)$, $G_{P,\pm}\circ P(\Phi)=\Phi$, and likewise for all $\Phi\in\Gamma_\mathrm{c}(E^*\otimes\wedge^nT^*M)$, $P\circ G_{P,\pm}(\Phi)=\Phi$; and
        \item For all $\Phi\in\Gamma_\mathrm{c}(E^*\otimes\wedge^nT^*M)$, the support of $G_{P,\pm}(\Phi)$ is in the closed future/past cone of the support of $\Phi$, respectively.
    \end{enumerate}
\end{defn}
In other words, the advanced/retarded Green functions for $P$, if they exist, can each be thought of as a sort of ``inverse" to $P$, in that (restricted to sections of $E$, $E^*\otimes\wedge^nT^*M$ with compact support) they act as right and left inverses to $P$, and moreover, do so in such a way that the support of sections under the action of $G_{P,\pm}$ ``plays nicely" with the causal structure of $(M, g)$. Since the advanced/retarded Green functions (if they exist) are linear operators, one can also take algebraic combinations of them to define further Green functions. Possibly the most useful is the following:
\begin{defn}[Causal Green function]
    Let $E\overset{\pi}{\rightarrow}(M,g)$ and $P$ again be as above, and suppose that $P$ admits advanced and retarded Green functions $G_{P,\pm}$. The causal Green function $G_P$ for $P$ is the difference $G_P:=G_{P,+}-G_{P,-}$.
\end{defn}
We note for future reference that we have the following exact sequence (see \textcite[proposition 2.1]{Khavkine2014} and \textcite[theorem 4.3]{Baer2015}):
\begin{equation} 
    \mathbf{0} \longrightarrow \Gamma_\mathrm{c}(E) \overset{P}{\longrightarrow} \Gamma_\mathrm{c}(E^*\otimes\wedge^nT^*M) \overset{G_P}{\longrightarrow}\Gamma_\mathrm{sc}(E) \overset{P}{\longrightarrow} \Gamma_\mathrm{sc}(E^*\otimes\wedge^nT^*M) \longrightarrow \mathbf{0}.\label{fig:exactcausalgreenfunction}
\end{equation}

We can now define a Green hyperbolic Lagrangian field theory:
\begin{defn}[Green hyperbolic field theory]
    Let $E\overset{\pi}{\rightarrow}(M,g)$ and $P$ be as above. The operator $P$ is said to be Green hyperbolic iff $P$ admits unique advanced and retarded Green functions.\footnote{Alternative, and equivalent, approaches to the definition of a Green hyperbolic field theory are available, on which uniqueness of the advanced and retarded Green functions is derived rather than stipulated, are available (see, e.g., \textcite{Fewster2025Hadamard,Khavkine2014,Baer2015}); the presentation here is chosen for ease of exposition.}
\end{defn}

\subsection{Distributions, wavefront sets, propagators, and $\mathcal{V}^+$-decomposability}

We now discuss the notion of a distributional section. Let $V\rightarrow E\rightarrow (M,g)$ be a (complex) vector bundle, and write $\Gamma(E)$ for the space of sections of $E$ and $\Gamma_U(E)$ for the space of sections of $E$ with support on some compact $U\subset M$, and $\Gamma_\mathrm{c}(E)$ for the space of compactly supported sections of $E$. The space $\Gamma(E)$ comes endowed with an obvious vector space structure, given by fibrewise addition and scalar multiplication. We can also endow it with a topology, relative to any connection on $E$ and family of appropriate norms on sections of $E$ and their derivatives:
\begin{defn}[Fr\'echet topological vector space of sections]
    Let $E$ be as above. For each compact region $U$ of $M$, consider the (family of) Fr\'echet seminorms on sections $\Phi\in \Gamma(E)$ given by
    \begin{equation*}
        ||\Phi||_{(U,N)}=\underset{n\leq N}{\mathrm{max}}(\underset{x\in U}{\mathrm{sup}}(|\nabla^n\Phi|_n)),
    \end{equation*}
    where $N\in\mathbb{N}$, $|\,\cdot\,|_n$ is a norm on the fibres of $E\otimes S^nT^*M$ (where $S^n$ is the $n$-fold symmetric tensor product), and $\nabla$ is a connection on $E$. This makes $\Gamma_U(E)$ into a Fr\'echet topological vector space for any such $U\subset M$. 
\end{defn}
We can now define a distribution on (sections of) $E$ (a.k.a.\ a distributional section of $E^*$:
\begin{defn}
    Let $E$ be as above. A distribution on sections of $E$, a.k.a.\ a distributional section of $E^*$, is a continuous (in the Fr\'echet topology on $\Gamma_U(E)$ for each $\Gamma_U(E)\subset\Gamma_\mathrm{c}(E)$) linear functional $u:\Gamma_\mathrm{c}(E)\rightarrow \mathbb{C}$. 
\end{defn}
Adopting generalized function notation, if $u$ is a distributional section of $E^*$ we write $\int_Mu_a\Phi^a\mathrm{d}V_g$ for $u(\Phi)$. Occasionally it is helpful to `curry' the volume form on $M$ to either $u$ or $\Phi$ (so that a distribution on sections of $E$ can equivalently be thought of as a distributional section of $E^*\otimes\wedge^nT^*M)$ which is a linear combination of tensor products of distributional sections of $E^*$ with \textit{bona fide} volume forms, and a distribution on sections of $E\otimes\wedge^nT^*M$ can equivalently be thought of as a distributional section of $E^*$, where $\wedge^nT^*M$ is the bundle of volume forms on $M$). 

As mentioned above, a distributional section of $E^*$ can be thought of as generalizing the notion of a section of $E^*$. The notion of regular and singular distributions makes this precise:
\begin{defn}[Regular and singular distributions]
    A distributional section $u$ of $E^*$ is regular iff for all $\Phi\in \Gamma_c(M)$, $u(\Phi)=u_f(\Phi):=\int_U f_a\Phi^a\mathrm{d}V_g$ for some $f\in \Gamma(E^*)$ (where $U$ is some compact region of $M$ containing the support of $f, \Phi$); otherwise it is singular.
\end{defn}
In other words, a regular distribution is one whose action on all sections of $E$ is given by integration against a \textit{bona fide} section of $E^*$ (with respect to some volume form). Distributional sections also share various properties in common with sections, e.g., addition and scalar multiplication of distributional sections is again a distributional section. However, one cannot, in general, take (tensor) products of two distributional sections. In order to discuss when such products can be taken, we need to be able to probe both `where' and `in which directions' a distribution fails to be regular. The answer to the `where' question is captured by the notion of \textit{singular support}:
\begin{defn}[Singular support]
    Let $u$ be a distributional section of $E^*$. Then for any $x\in M$, $x$ is in the singular support of $u$ (denoted $\mathrm{SingSupp}(u)$) iff there exists no neighbourhood $U$ of $x$ for which there is some $f\in \Gamma_U(E^*)$ such that $u(\Phi)=u_f(\Phi)$ for all $\Phi\in\Gamma_U(E)$.
\end{defn}

To discuss the `directions' in which a distribution fails to be regular, we need the further machinery of \textit{wavefront sets}, introduced by \textcite{Hoermander1}.\footnote{Various definitions of the wavefront set are available; the presentation here is adapted from \textcite{Fewster2025Hadamard}.} Let $u$ be a distribution and let $(x,k)\in T^*M\backslash\mathbf{0}$, where $\mathbf{0}$ denotes the zero section. Fix a coordinate system $X_\alpha$ in a neighbourhood $U$ of $x$. Then:
\begin{defn}[Characteristic covectors of a distribution]
    $(x,k)$ is non-characteristic for $u$ iff there is a smooth density $\chi(x)$ with compact support on $U$ and an open cone $V$ in $T^*_xM\backslash\{\mathbf{0}\}$ such that $\chi(x)\neq 0$, $k\in V$, and 
    \begin{equation*}
        \underset{l\in V}{\mathrm{sup}}(1+||l||N)|u(\chi e_l)|<\infty,
    \end{equation*}
    for all $N\in\mathbb{N}$, where $e_l(p)=e^{iX_\alpha(p)l(\partial_{X_\alpha})}$ and $||\cdot||$ is any norm on $T^*_xM\backslash\{\mathbf{0}\}$; otherwise $(x,k)$ is characteristic.
\end{defn}
The characteristic covectors of a distribution track the directions in which $u$ fails to be regular. This allows us to define:
\begin{defn}[H\"ormander wavefront set]
    Let $u$ be a distribution. The wavefront set $\mathrm{WF}(u)$ of $u$ is the (closed, conic) subset of $T^*M\backslash\{\mathbf{0}\}$ of all characteristic covectors for $u$. The wavefront set $\mathrm{WF}(u)(x)$ of $u$ at $x\in M$ is the intersection $T^*_xM\backslash\{\mathbf{0}\}\cap \mathrm{WF}(u)$. 
\end{defn}
Note that $\mathrm{WF}(u)$ is independent of the choice of coordinate system. The wavefront set extends to the case of complex vector bundles with finite rank, defining:
\begin{equation*}
    \mathrm{WF}(u)=\underset{\Phi\in\Gamma(E*)}{\bigcup}\mathrm{WF}(f\mapsto u(f\Phi)).
\end{equation*}

We have the following basic properties of the wavefront set \parencite[ch.\ 8.1]{Hoermander1}:
\begin{prop}[Basic properties of the wavefront set]
    The wavefront set of a distributional section $u$ of $E^*$ has the following properties:
    \begin{enumerate}
        \item $\mathrm{SingSupp}(u) = \pi_{T^*M}(\mathrm{WF}(u))$;
        \item $\mathrm{WF}(u)=\varnothing$ iff $u$ is regular.
        \item If $\nabla$ is any connection on $E$ and $\xi$ is any vector field on $M$, $\mathrm{WF}(\xi^\mu\nabla_\mu u)\subset\mathrm{WF}(u)$.
        \item $\mathrm{WF}(u+v)\subseteq \mathrm{WF}(u)\cup\mathrm{WF}(v)$.
    \end{enumerate}
\end{prop}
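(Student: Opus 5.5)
All four properties are local, so I will reduce everything to the scalar case in a coordinate chart. First I fix a point $x$, a chart $X_\alpha$ on a neighbourhood $U$, and a local frame $\{e_a\}$ of $E$ over $U$ with dual frame $\{e^a\}$. A distributional section $u$ of $E^*$ then corresponds over $U$ to finitely many scalar distributions $u_a(f)=u(fe_a)$. By the definition of $\mathrm{WF}(u)$ as a union over test sections, together with linearity, I expect to show $\mathrm{WF}(u)|_U=\bigcup_a\mathrm{WF}(u_a)$. Once this is in place, each claim reduces to the corresponding scalar statement on an open subset of $\mathbb{R}^n$ in H\"ormander's ch.\ 8.1. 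What remains is to check those scalar statements directly from the definition of characteristic covectors.

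For property 4, suppose $(x,k)$ is non-characteristic for both $u$ and $v$, witnessed by cutoffs $\chi_1,\chi_2$ and cones $V_1,V_2$. I take $\chi=\chi_1\chi_2$ and $V=V_1\cap V_2$. The key auxiliary fact is that rapid decay of $\widehat{\chi_1 u}$ in a cone implies rapid decay of $\widehat{\psi\chi_1u}$ in any strictly smaller closed cone, for any test function $\psi$. I prove this by writing $\widehat{\psi\chi_1 u}=\hat\psi*\widehat{\chi_1u}$, splitting the convolution integral into the region $|l-l'|<\epsilon|l|$ and its complement, and using the Schwartz decay of $\hat\psi$ together with the polynomial bound on $\widehat{\chi_1u}$ that follows from the finite order of $u$ on compacts. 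This localisation lemma is the main technical obstacle; the other properties lean on it. With the lemma in hand, linearity of $u\mapsto u(\chi e_l)$ gives 4 immediately.

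For property 3, I write $\xi^\mu\nabla_\mu u$ locally as $\xi^\mu\partial_\mu u_a+(\text{smooth})\cdot u_b$. Multiplication by a smooth function does not enlarge the wavefront set, again by the localisation lemma. For the derivative term, $\chi\,\partial_\mu u=\partial_\mu(\chi u)-(\partial_\mu\chi)u$, and the Fourier transform of $\partial_\mu(\chi u)$ is $il_\mu\widehat{\chi u}$. Multiplying by $l_\mu$ preserves rapid decay, and the second term is handled by the lemma, so property 4 then yields 3.

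For property 2, if $u=u_f$ is regular then $\chi u$ is a smooth compactly supported function, so its Fourier transform decays rapidly in every direction and $\mathrm{WF}(u)=\varnothing$. Conversely, if $\mathrm{WF}(u)=\varnothing$, then at each $x$ compactness of the unit cosphere lets me cover it by finitely many cones. The lemma then gives a single cutoff $\chi$ with $\widehat{\chi u}$ rapidly decreasing everywhere, so $\chi u$ is smooth by Fourier inversion. A partition of unity patches these local pieces into a global $f\in\Gamma(E^*)$. Property 1 follows from the same argument. A point $x$ lies outside $\mathrm{SingSupp}(u)$ exactly when $u$ is smooth near $x$, which by the local version of 2 holds iff no covector over $x$ is characteristic. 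Since $\mathrm{WF}(u)$ is closed and conic, the projection of its complement-free fibres matches this, giving $\mathrm{SingSupp}(u)=\pi_{T^*M}(\mathrm{WF}(u))$.
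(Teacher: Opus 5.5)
Your proposal is correct and matches what the paper relies on: the paper gives no proof of its own and simply cites H\"ormander \S 8.1, where your localisation lemma is Lemma 8.1.1 and the other three properties, together with the componentwise reduction for bundles, follow just as you outline. Two small details need tightening. In property 4, take $V$ to be an open cone about $k$ whose closure lies inside $V_1\cap V_2$, since the lemma only gives decay on strictly smaller cones. In property 3, choose $\chi$ supported where the original witness $\chi_1$ is nonvanishing, so that $(\partial_\mu\chi)u=(\partial_\mu\chi/\chi_1)\,\chi_1u$ really has the form $\psi\chi_1 u$ that the lemma handles.
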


This allows us to say when the (tensor) product of distributional sections is well-defined. We begin with the case of distributions in a single variable:
\begin{defn}[Sum of wavefront sets]
    Let $u_1,u_2$ be distributional sections of $E^*$. The fibrewise sum of their wavefront sets $\mathrm{WF}(u_1)+\mathrm{WF}(u_2)$ is the subset of $T^*M$ defined via
    \begin{equation*}
        \mathrm{WF}(u_1)+\mathrm{WF}(u_2):=\{(x,k_1+k_2)|(x,k_1)\in\mathrm{WF}(u_1)\,\mathrm{and}\,(x,k_2)\in\mathrm{WF}(u_2)\}
    \end{equation*}
\end{defn}
\begin{defn}[Product of distributions, H\"ormander]
    Let $u_1,u_2$ be distributional sections of $E^*$ and suppose that $\mathrm{WF}(u_1)+\mathrm{WF}(u_2)$ does not intersect the zero section of $T^*M$. Then their pointwise product $u_1\cdot u_2(\Phi):=u_1(\Phi)u_2(\Phi)$ is well-defined via the pullback of $u_1\otimes u_2$ along the diagonal map $M\rightarrow M\times M$. 
\end{defn}

For our purposes, it will also be useful to have to hand a variant of this definition for partial products of distributions in multiple variables. Let $V_1\rightarrow E_1\rightarrow M_1$, $V_2\rightarrow E_2\rightarrow M_2$, $V_3\rightarrow E_3\rightarrow M_3$ be vector bundles and write $V_1\otimes V_2\rightarrow E_1\boxtimes E_2\rightarrow M_1\times M_2$ for the \textit{exterior tensor product bundle}. If $u$ is a distributional section of $E_1^*\boxtimes E_2^*$ write
\begin{equation*}
    \mathrm{WF}_{M_2}(u):=\{(x_2,k_2)|(x_1,\mathbf{0};x_2,k_2)\in\mathrm{WF}(u)\;\mathrm{for}\;\mathrm{some}\;x_1\in M_1\}.
\end{equation*}
We then have:
\begin{defn}[Product of distributions (multivariable version)]\label{def:multivariabledistprod}
    Let $u_1,u_2$ be distributional sections of $E_1^*\boxtimes E_2^*$, $E_2^*\boxtimes E_3^*$ respectively, and suppose that $\mathrm{WF}_{M_2}(u_1)+\mathrm{WF}_{M_2}(u_2)$ does not intersect the zero section of $T^*M_2$. Then their pointwise partial product on $M_2$ is well-defined, and is given (in generalized function notation) via
    \begin{equation*}
        (u_1\circ u_2)(x_1,x_3):=\int_{M_2}u_1(x_1,x_2)u_2(x_2,x_3)\mathrm{d}V(x_2).
    \end{equation*}
\end{defn}

The most important examples of distributional sections, for our purposes, will be the \textit{propagators} of a Green hyperbolic Lagrangian field theory, which are the integral kernels of its Green functions:
\begin{defn}[Advanced and retarded propagators]\label{def:advretprop}
    Let $(E,L)$ be a classical Green hyperbolic Lagrangian field theory with Euler-Lagrange operator $P$, and suppose that there exist distributional sections $\Delta_{P,\pm}$ of $E\boxtimes E$ on $M\times M$ such that 
    \begin{equation*}
        G_{P,\pm}(\Phi)^a(x)=\int_{y\in M}\Delta_{P,\pm}^{ab}(x,y)\Phi_b(y)
    \end{equation*}
    for every $\Phi\in\Gamma_\mathrm{c}(E^*\otimes\wedge^nT^*M)$. Then $\Delta_{P,\pm}$ are called the advanced/retarded propagators for $P$. (Here, $V\otimes V\rightarrow E\boxtimes E\rightarrow M\times M$ denotes the exterior tensor product bundle of $E$ with itself.)
\end{defn}
One can also define the causal propagator:
\begin{defn}[Causal propagator]
    Let $(E,L)$, $P$ be as above, and suppose that the advanced/retarded propagators for $P$ exist. Then the difference $\Delta_P:=\Delta_{P,+}-\Delta_{P,-}$ is called the causal propagator for $P$.
\end{defn}
One can verify directly by linearity that the causal propagator for $P$ is itself the integral kernel of the causal Green function for $P$. Notice that $\Delta_P$ is antisymmetric in its arguments.

These tools allow us to define an important subclass of Green hyperbolic Lagrangian field theories---namely, those which are $\mathcal{V}^+$\textit{-decomposable}. Given a Green hyperbolic Lagrangian field theory, we say that it is $\mathcal{V}^+$-decomposable if there exists a conic, relatively closed subset $\mathcal{V}^+$ of $T^*M$ (so that $\mathcal{V}^+\cap \mathcal{V}^-=\varnothing$, where $\mathcal{V}^-:=-\mathcal{V}^+$) such that 
\begin{equation*}
    \mathrm{WF}(\Delta_P)\subset(\mathcal{V}^+\times \mathcal{V}^-)\cup(\mathcal{V}^-\times \mathcal{V}^+).
\end{equation*}
Very roughly, one can think of $\mathcal{V}^+$-decomposability as saying that the singular structure of $\Delta_P$ can be split into a well-defined `future-past' and `past-future' component, but where these are not required to be defined with respect to the metric lightcones. In particular, for a $\mathcal{V}^+$-decomposable Green hyperbolic Lagrangian field theory, this allows us to define a causal (i.e., lightcone-like) structure associated to its causal propagator. That is, we say that a covector $k$ at $x\in M$ is $\mathcal{V}$-causal if it lies in the conical hull $\mathcal{U}^\pm$ of $\mathcal{V}^+_x$ or $\mathcal{V}^-_x$, respectively, $\mathcal{V}$-timelike if it is $\mathcal{V}$-causal but not in $\mathcal{V}^\pm$, and $\mathcal{V}$-spacelike otherwise.

\subsection{Observables}
We now discuss how a Green hyperbolic Lagrangian field theory can be used to construct a quantum field theory. (Throughout this section, we assume that $E$ is a complex vector bundle.) This requires us, in the first instance, to define an appropriate space of \textit{observables} of the theory. In general, an observable is a functional from field configurations to complex numbers, i.e., given a field bundle $E\overset{\pi}{\rightarrow}(M,g)$ we have the following definition:
\begin{defn}[Observable]
    Let $E\overset{\pi}{\rightarrow}(M,g)$ be a vector bundle on $(M,g)$. An observable on $E$ is a smooth map $A:\Gamma(E)\rightarrow\mathbb{C}$. The space of observables on $E$ is denoted $\mathrm{Obs}(E)$.
\end{defn}

This definition of an observable is can be specialized in various ways. First, we could consider observables which are \textit{local}, in the sense that they can be expressed as integrals of complex-valued functionals of field values and their derivatives (up to some order) over some compact $U\subset M$. That is, we say an observable $A$ is local, of some order $k$ (possibly infinite), iff there exists some $A_\mathrm{loc}:J^kE\rightarrow \mathbb{C}\otimes\wedge^nT^*M$ whose image is supported on some compact $U\subset M$, such that, for any $\Phi\in \Gamma(E)$,
\begin{equation*}\label{eq:polyobsloc}
    A(\Phi)=\int_UA_\mathrm{loc}\circ j^k\Phi.
\end{equation*}
The space of local observables on $E$ will be denoted $\mathrm{Obs}_\mathrm{loc}(E)$. 

Another important class of observables are those which are \textit{polynomial}, in the sense that they can be expressed as a (formal) power series in field values.\footnote{Recall that that a power series is formal means that no requirement is made that it be convergent.} That is, we say that an observable $A$ is polynomial iff there exist compactly supported distributional sections $\alpha^{(i)}$ of $\boxtimes^{i}_\mathrm{sym}E^*$, $i=0,1,2,...$, such that for any $\Phi\in\Gamma(E)$,
\begin{equation*}
    A(\Phi)=\alpha^{(0)}+\int_M\alpha^{(1)}(x)\Phi(x)\mathrm{d}V_g(x)+\int_M\int_M\alpha^{(2)}(x,y)\Phi(x)\Phi(y)\mathrm{d}V_g(x)\mathrm{d}V_g(y)+...\ ,
\end{equation*}
and denote the space of such observables as $\mathrm{PolyObs}(E)$. We say that a polynomial observable is a formal power series in some parameter, $\hbar$, iff there exist compactly supported distributional sections $\alpha^{(i)}$ of $\boxtimes^{i}_\mathrm{sym}E^*$, $i=0,1,2,...$, such that for any $\Phi\in\Gamma(E)$,
\begin{equation*}
    A(\Phi)=\alpha^{(0)}+\hbar\int_M\alpha^{(1)}(x)\Phi(x)\mathrm{d}V_g(x)+\hbar^2\int_M\int_M\alpha^{(2)}(x,y)\Phi(x)\Phi(y)\mathrm{d}V_g(x)\mathrm{d}V_g(y)+...\ ,
\end{equation*}
and denote the space of such observables as $\mathrm{PolyObs}(E)[[\hbar]]$. Of course, any polynomial observable can be made into one which is a formal power series in $\hbar$ by appropriately shuffling around factors of $\hbar$ and its powers. Note also that at this point, $\hbar$ is really just a placeholder (though the notation is, of course, suggestive).

So far, the singular structure of the distributional sections $\alpha^{(i)}$ appearing as coefficients of a polynomial observable is unconstrained. But for many purposes, it is helpful to consider more restrictive subspaces of polynomial observables whose distributional coefficients $\alpha^{(i)}$ are `well-behaved' in some appropriate sense. For example, we could consider the space of \textit{regular} polynomial observables, denoted $\mathrm{PolyObs}_\mathrm{reg}(E)$, where each $\alpha^{(i)}$ is a regular distribution. This is a rather stringent requirement. A somewhat more interesting, larger subspace of polynomial observables are those which are $\mathcal{V}$\textit{-microcausal}.\footnote{The definition of a $\mathcal{V}$\textit{-microcausal} observable is inspired by, and is a generalization of, the definition of a microcausal observable discussed by \textcite{FredenhagenRejzner2015}, extended to the case where $\mathcal{V}^+$ is no longer required to be the (future or past) lightcone. It also generalizes the definition of a `field' discussed by \textcite{Duetsch}.} Let $\mathcal{V}^+$ be a conic, relatively closed subset of $T^*M$. We say that a polynomial observable $A$ on $E$ is $\mathcal{V}$-microcausal iff, for all $i=0,1,2,...$ any $x\in M$, the wavefront set of $\alpha^{(i)}$ at any $(x_1,...,x_i)\in \times^iM$ is disjoint from those points where all $i$ wavevectors are in the closed future cone or closed past cone of $\mathcal{V}$-causal covectors $\mathcal{U}^\pm$ at $(x_1,...,x_i)$, i.e.,
\begin{equation*}
    \mathrm{WF}(\alpha^{(i)})(x_1,...,x_i)\cap \mathcal{U}^+(x_1)\times...\times \mathcal{U}^+(x_i)=\varnothing
\end{equation*}
and likewise
\begin{equation*}
    \mathrm{WF}(\alpha^{(i)})(x_1,...,x_i)\cap \mathcal{U}^-(x_1)\times...\times \mathcal{U}^-(x_i)=\varnothing.
\end{equation*}
Note that this implies that for any $\mathcal{V}$-microcausal observable, $\alpha^{(1)}$ is either regular or its wavefront set at each point is entirely $\mathcal{V}$-spacelike. We write $\mathrm{PolyObs}_{\mathrm{mc},\mathcal{V}}(E)$ for the space of $\mathcal{V}$-microcausal polynomial observables on $E$. Note that the local observables $\mathrm{Obs}_\mathrm{loc}(E)$ form a dense subspace of $\mathrm{PolyObs}_{\mathrm{mc},\mathcal{V}}(E)$ for $\mathcal{V}=\mathcal{N}$, the bundle of null covectors \parencite[39]{FredenhagenRejzner2015}. 

The observables we have considered thus far have been defined on the full space of sections $\Gamma(E)$. If we restrict to the space of sections of $E$ which are solutions of the Euler--Lagrange equations for $L$ (or equivalently, identify observables which agree on their action on the space of solutions of $(E,L)$), we obtain the definition of an \textit{on-shell} observable. That is, we have:
\begin{defn}
    Let $(E,L)$ be a Green hyperbolic Lagrangian field theory. An on-shell observable for $(E,L)$ is the restriction of an observable to the kernel of $P$. The space of on-shell observables for $(E,L)$ is denoted $\mathrm{Obs}(E,L)$ (\textit{mutatis mutandis} for on-shell local, polynomial, $\mathcal{V}$-microcausal, regular observables etc.). 
\end{defn}

The space $\mathrm{PolyObs}_\mathrm{mc,\mathcal{V}}(E)$ comes equipped with two binary operations: addition, defined via $(A_1+A_2)(\Phi):=A_1(\Phi)+A_2(\Phi)$ for all $\Phi\in\Gamma(E)$, and pointwise multiplication, defined via $(A_1\cdot A_2)(\Phi):=A_1(\Phi)A_2(\Phi)$ whenever H\"ormander's criterion is satisfied (as well as multiplicative and additive identities for these operations, namely the observables $1:\Gamma(E)\rightarrow \mathbb{C}$ and $0:\Gamma(E)\rightarrow \mathbb{C}$ such that $1(\Phi)=1$ and $0(\Phi)=0$ for all $\Phi\in\Gamma(E)$), making it into a unital commutative $*$-algebra. Moreover, if $(E,L)$ is a free Green hyperbolic Lagrangian field theory, then these operations descend to operations of addition and pointwise multiplication on $\mathrm{PolyObs}_\mathrm{mc,\mathcal{V}}(E,L)$, which follows by linearity of $P$ and the product rule. 

\subsection{Operator products}

The second part of our construction of a quantum field theory requires us to define an appropriate operator product on our chosen space of observables. To this end, it is helpful to begin by recalling the situation for $\mathrm{PolyObs}_\mathrm{reg}(E)$. Let $(E,L)$ be a $\mathcal{V}^+$-decomposable Green hyperbolic Lagrangian field theory. Classically, $\mathrm{PolyObs}_\mathrm{reg}(E)$ (and also $\mathrm{PolyObs}_\mathrm{mc,\mathcal{V}}(E)$) comes equipped with the \textit{Poisson--Peierls} bracket \parencite{Peierls1952}, defined as follows:
\begin{defn}[Poisson-Peierls bracket of a Green hyperbolic Lagrangian field theory]
    Let $(E,L)$ be a $\mathcal{V}^+$-decomposable Green hyperbolic Lagrangian field theory, with Euler-Lagrange operator $P$. The Poisson-Peierls bracket $\{\,\cdot\,,\cdot\,\}_P$ on $\mathrm{PolyObs}_\mathrm{reg}(E)$ (or $\mathrm{PolyObs}_\mathrm{mc,\mathcal{V}}(E)$) is defined via
    \begin{equation*}
        \{A_1,A_2\}_P:=\int_{M\times M}\Delta_P^{ab}(x,y)\frac{\delta A_1}{\delta\Phi^a(x)}\cdot\frac{\delta A_2}{\delta\Phi^b(y)}\mathrm{d}V_g(x)\mathrm{d}V_g(y).
    \end{equation*}
\end{defn}
A full discussion of the sense in which the Poisson--Peierls bracket is a generalization of the more familiar Poisson bracket would take us a little too far afield here; for now, the point is that the Poisson--Peierls bracket permits the definition of a Poisson bracket-like structure on the space of regular observables without the introduction of canonical variables.\footnote{A little more formally, the causal propagator $\Delta_P$ of a Green hyperbolic Lagrangian field theory defines a symplectic form on the image of $G_P$, i.e., the space of solutions of $P$, making it into a (complexified) symplectic space \parencite{Baer2015}; the Poisson--Peierls bracket on $\mathrm{PolyObs}_\mathrm{reg}(E)$ is then the extension of this off-shell. This construction is functorial (over the category of Green hyperbolic Lagrangian field theories), and so in that sense the Poisson--Peierls bracket is  `canonical'.} Note that, by the exact sequence \eqref{fig:exactcausalgreenfunction}, $\{\,\cdot\,,\cdot\,\}_P$ descends to the space of on-shell regular observables.

We would now like to construct an operator product on $\mathrm{PolyObs}_\mathrm{reg}(E)[[\hbar]]$ which `resembles' the Poisson--Peierls bracket (in a sense to be discussed). Any distributional section $\Pi$ of $E\boxtimes E$ induces a unital associative algebra product on $\mathrm{PolyObs}_\mathrm{reg}(E)[[\hbar]]$, denoted $\star_\Pi$, via
\begin{equation}
    A_1\star_\Pi A_2:=\mathrm{prod}\circ\mathrm{exp} \left( \hbar\int_{M\times M} \Pi^{ab}(x,y)\frac{\delta}{\delta\Phi^a(x)}\otimes\frac{\delta}{\delta\Phi^b(y)}\mathrm{d}V_g(x)\mathrm{d}V_g(y)\right)(A_1\otimes A_2)\label{eq:starprod}
\end{equation}
where $\mathrm{prod}$ denotes the pointwise product introduced above, i.e., $\mathrm{prod}(A_1\otimes A_2):=A_1\cdot A_2$, and where the exponential map is understood as a formal power series. At this point, several comments are in order. First, $\star_\Pi$ will be a \textit{star} product on $\mathrm{PolyObs}_\mathrm{reg}(E)[[\hbar]]$ (under complex conjugation) iff the antisymmetric part of $\Pi$ coincides with its imaginary part.\footnote{Recall that that $\star_\Pi$ is a star product means that for any $A_1$, $A_2$, $(A_1\star_\Pi A_2)^*=A_2^*\star_\Pi A_1^*$. Since $(A_1\star_\Pi A_2)^*=A_2^*\star_{(\Pi)^*} A_1^*$, this condition will hold iff $(\Pi^{ab})^*=-i\mathrm{Im}(\Pi^{ba})+\mathrm{Re}(\Pi^{ba})=\Pi^{ab}$, which is the case iff $\mathrm{Im}(\Pi^{ba})=-\mathrm{Im}(\Pi^{ab})$ and $\mathrm{Re}(\Pi^{ba})=\mathrm{Re}(\Pi^{ab})$.} Moreover, if $(E,L)$ is a Green hyperbolic Lagrangian field theory, then $\star_{\Pi}$ is a deformation quantization of the Poisson--Peierls bracket on $\mathrm{PolyObs}_\mathrm{reg}(E)$ iff the antisymmetric part of $\Pi$ is the causal propagator for $P$, in the sense that for all $A_1, A_2\in \mathrm{PolyObs}_\mathrm{reg}(E)$, as $\hbar\rightarrow 0$ we have
\begin{equation*}
    A_1\star_{\Pi}A_2\rightarrow A_1\cdot A_2
\end{equation*}
and
\begin{equation*}
    (A_1\star_{\Pi}A_2-A_2\star_{\Pi}A_1)/i\hbar\rightarrow \{A_1,A_2\}_{P}.
\end{equation*}
This follows directly from the series expansion of \eqref{eq:starprod}, i.e.,
\begin{align}
    A_1\star_\Pi A_2=&  A_1\cdot A_2+\hbar\int_{M\times M}\Pi^{ab}(x,y)\frac{\delta A_1}{\delta\Phi^a(x)}\cdot\frac{\delta A_2}{\delta\Phi^b(y)}\mathrm{d}V_g(x)\mathrm{d}V_g(y) + \nonumber \\ & \hbar^2\int_{M\times M}\int_{M\times M}\Pi^{ab}(x,y)\Pi^{cd}(z,u)\frac{\delta^2 A_1}{\delta\Phi^a(x)\delta\Phi^c(z)}\cdot\frac{\delta^2 A_2}{\delta\Phi^b(y)\delta\Phi^d(u)}\mathrm{d}V_g(x)\mathrm{d}V_g(y)\mathrm{d}V_g(z)\mathrm{d}V_g(u)+... \,.\label{eq:expandedstarpprod}
\end{align}
Finally, $\star_\Pi$ will descend to a star product on $\mathrm{PolyObs}_\mathrm{reg}(E,L)$ iff for any $\Phi\in \Gamma_c(E^*\otimes\wedge^nT^*M)$, 
\begin{equation*}
    P\circ\int_M\Pi^{ab}(x,y)\Phi_a(x)=\mathbf{0}
\end{equation*}
and
\begin{equation*}
    P\circ\int_M\Pi^{ab}(x,y)\Phi_b(y)=\mathbf{0}
\end{equation*}
\parencite{FredenhagenRejzner2015,DuetschFredenhagen2001,DuetschFredenhagen2001-2}. (This is because the image of $P$ is a two-sided ideal with respect to this algebra iff the above conditions are satisfied.) 

The most obvious choice of $\Pi$ satisfying these conditions is simply to take $\Pi=\frac{i}{2}\Delta_P$. Evidently, the situation is more subtle if we want to consider analogous products on larger algebras of polynomial observables, i.e., on algebras of polynomial observables which are not necessarily regular, such as $\mathrm{PolyObs}_{\mathrm{mc},\mathcal{V}}(E)$. To see the issue, consider again the series expansion \eqref{eq:expandedstarpprod}: if $A_1$ or $A_2$ are not regular, and no further constraints on $\Pi$ are specified, there is no guarantee that the terms appearing on the right-hand side of the expression \eqref{eq:expandedstarpprod} are well-defined as products of distributions. There are two ways around this problem. One is to be more selective about the choice of $\Pi$ (at the other extreme, if $\Pi$ is regular, then $\star_\Pi$ is well-defined on the full space $\mathrm{PolyObs}(E)$). The other is to be more selective about the space of observables being considered. Evidently, neither extreme is what we are after. On the one hand, the propagators for a Green hyperbolic Lagrangian field theory are generally not regular distributions (and it is this choice of $\Pi$ which is of the most interest from the perspective of quantisation of $(E,L)$). On the other hand, restricting attention to regular observables is too limited to encompass all the observables of interest to us.

\section{The Hadamard condition}\label{sec:hadamard}

We now turn to the Hadamard condition. We have the following definition (due to \textcite{Fewster2025Hadamard}):
\begin{defn}[Hadamard distribution]\label{def:Hadamard}
    Let $(E,L)$ be a $\mathcal{V}^+$-decomposable Green hyperbolic Lagrangian field theory with Euler-Lagrange operator $P$. A distributional section $\Delta_H$ of $E\boxtimes E$ on $M\times M$ is called $\mathcal{V}^+$-Hadamard for $P$ iff 
    \begin{enumerate}
        \item Its antisymmetric part is $i$ times the causal propagator for $P$, i.e., $\Delta_H(x,y)-\Delta_H(y,x)=i\Delta_P(x,y)$;
        \item (Wavefront set spectrum condition (WFSSC)) Its wavefront set is a subset of one causal half of that of the causal propagator for $P$, i.e.,
        \begin{equation*}
            \mathrm{WF}(\Delta_H)\subset\mathcal{V}^+\times\mathcal{V}^-;
        \end{equation*}
        \item (Bi-solution) It is a distributional bi-solution of $P$, in the sense that for any $\Phi\in\Gamma_\mathrm{c}(E^*\otimes\wedge^nT^*M)$,
        \begin{equation*}
            P\circ\int_M\Delta^{ab}_H(x,y)\Phi_a(x)=\mathbf{0}
        \end{equation*}
        and likewise
        \begin{equation*}
            P\circ\int_M\Delta_H^{ab}(x,y)\Phi_b(y)=\mathbf{0}.
        \end{equation*}
        \item (Positivity) It is positive semi-definite, i.e.\ for any $\Phi\in \Gamma_\mathrm{c}(E^*\otimes\wedge^nT^*M)$,
        \begin{equation*}
            \int_{M\times M}\Phi_a^*(x)\Delta^{ab}_H(x,y)\Phi_b(y)\geq 0.
        \end{equation*}
    \end{enumerate}
\end{defn}
We note that if $\Delta_H$ is Hadamard, then H\"ormander's criterion for the existence of partial products implies that the operator product $\star_{\Delta_H}$ is well-defined on the space $\mathrm{PolyObs}_{\mathrm{mc},\mathcal{V}}(E)[[\hbar]]$ (this is a straightforward modification of e.g., \parencite{Duetsch,DuetschFredenhagen2001,DuetschFredenhagen2001-2}).

I have presented the Hadamard condition as, in the first instance, a condition on distributional sections of $E\boxtimes E$. To obtain the definition of a Hadamard \textit{state}, we
note that $\mathcal{V}^+$-Hadamard distribution $\Delta_H$ uniquely determines a corresponding (quasi-free) Hadamard state with $\Delta_H$ as its two-point function;\footnote{Recall that for a quasifree state, one has $\omega_{2n+1}=\mathbf{0}$ for all $n\in\mathbb{N}$ and $\omega_{2n}(f_1,...,f_{2n})=\sum_\mathrm{partitions} \omega_2(f_{i_1},f_{i_2})...\omega_2(f_{i_{2n-1}},f_{i_{2n}})$ for all $n>0\in\mathbb{N}$, where $\omega_n$ is the $n$-point function i.e., a quasifree state is one for which Wick's theorem holds. The definition of the $n$-point functions is essentially standard but would take us a little too far afield here; the reader should refer to e.g., \textcite{KhavkineMoretti2015}.} in particular, if we let the operator product on $\mathrm{PolyObs}_{\mathrm{mc},\mathcal{V}}(E,L)[[\hbar]]$ or any subspace thereof be $\star_{\Delta_H}$, then this state is defined by evaluating every $A\in\mathrm{PolyObs}_{\mathrm{mc},\mathcal{V}}(E,L)[[\hbar]]$ at the classical vacuum field history $\Phi=\mathbf{0}$, i.e., we have $\langle\,\cdot\,\rangle_\mathbf{0}:\mathrm{PolyObs}_{\mathrm{mc},\mathcal{V}}(E,L)[[\hbar]]\rightarrow \mathbb{C}[[\hbar]]$ where $\langle A\rangle_\mathbf{0}:=A(\mathbf{0})$.\footnote{Note that the construction of a Hadamard state from a Hadamard distribution does not rely on the choice of operator product $\star_{\Delta_H}$---in particular, the construction goes through for any operator product $\star_\Pi$ on $\mathrm{PolyObs}_{\mathrm{reg}}(E,L)[[\hbar]]$ whenever $\Pi$ satisfies (1) and (3), though the resulting definition is slightly less elegant.} For the case of Klein--Gordon theory, this leads to an alternative (and equivalent) expression of the Hadamard condition in terms of the `Hadamard form' of $\omega_2$ \parencite[theorem 5.1]{Radzikowski}, which was used in much of the older literature (see, e.g., \textcite{KayWald1991,Fullingetal1978,Wald1994}). 

\section{Why Hadamard states?}\label{sec:why}

With this technical background under our belt, we can now return to the question posed at the beginning of this paper: why Hadamard states? I will consider two popular motivations given for the Hadamard condition---one which appeals to a version of the equivalence principle, and the other which is more pragmatic in character (appealing to the use of Hadamard states in various theoretical applications). These are not the only arguments for the Hadamard condition which have appeared in the literature---though they are the primary ones---one other argument which I will not consider here is due to \textcite{FewsterVerch2013}, who show that for Klein--Gordon theory, the Wick squares of all time-derivatives have finite fluctuations only if the Wick-ordering is defined with respect to a Hadamard state.\footnote{Though some of the worries I will discuss in this section would also apply to this motivation---in particular, that it is not clear what it has to do with the sense in which Hadamard states are supposed to be `vacuum-like'.} I will argue that both motivations are unsatisfactory---either because the way they are supposed to work is unclear, or because it is not clear how they support the intended physical and foundational significance of the Hadamard condition as delimiting the class of `physically reasonable', `vacuum-like' states. 

As a brief health warning before I begin: it is important to distinguish at the outset two questions which fall under the umbrella `why Hadamard states?'. First, one might ask: what heuristics, principles, and arguments led us to delimit the class of Hadamard states as the `physically reasonable' ones in the first place? And second, one might ask: what is the argument for imposing the Hadamard condition (and thinking that only Hadamard states are `physically reasonable) \textit{after the fact}---that is, after having formulated (via whatever route) the Hadamard condition? Both questions are interesting, but my focus in this section, and throughout this paper, will be on the latter. In particular, my criticisms of the various motivations for the Hadamard condition considered in this section should not be taken as criticisms of these motivations construed as answers to the first question.\footnote{It is worth noting, in the context of the motivation via the equivalence principle I will consider, that even critics of the equivalence principle agree that it is unproblematic as a heuristic \parencite{FletcherWeatherallpart2,Fletcher2020,Weatheralldogmas}.}

\subsection{The equivalence principle}\label{sec:equivalence}

The first motivation for the Hadamard condition I will consider appeals to a version of the equivalence principle. On this approach, the argument for the Hadamard condition goes roughly as follows: since the two-point function of the Minkowski vacuum state for the Klein--Gordon field satisfies WFSSC for $\mathcal{V}^+=\mathcal{N}^+$, `physically reasonable' vacuum-like states on an arbitrary curved spacetime should also satisfy WFSSC (for this choice of $\mathcal{V}^+$), since, by the equivalence principle, their singularity structure should `locally approximate' that of the Minkowski vacuum. This idea was heuristically important in the development of the Hadamard condition, but has also been taken up as an argument for imposing the Hadamard condition after the fact, e.g.:
\begin{quote}
    The momentum-space approach arose from the philosophy that the ultraviolet divergences of quantum field theory arise from short-wavelength modes which do not probe the curvature of space-time. Thus, possession of the Hadamard form can in some suitably heuristic sense be described as an expression of the equivalence principle for quantum fields in a curved space-time background. \parencite[1943]{Amir-HomayoonOttewill}
\end{quote}
\begin{quote}
    We suggest that the Hadamard condition is a necessary (and, for quasifree states, necessary and sufficient) condition for a state to be physically admissible. This condition [...]\ is motivated in part by the equivalence principle [...]. \parencite[70]{KayWald1991}
\end{quote}
And in the more recent literature:
\begin{quote}
    One of the fundamental difficulties in the theory of quantum fields on curved spacetimes is that generic spacetimes possess no symmetries that could serve to distinguish a preferred vacuum state. Instead, for linear fields, experience has led to the delineation of the class of Hadamard states [...] whose short-distance structure approximates that of states with finite energy density in Minkowski space, motivated by the equivalence principle. \parencite[1]{FewsterVerch2013}
\end{quote}
\begin{quote}
    The idea is that, although it is not possible to uniquely assign each spacetime with a physically distinguishable state, it is possible to select a type of divergence in common with all physically relevant states is every spacetime. These preferred quasifree states with the same type of divergence ``resembling" Minkowski vacuum in a generic spacetime are called Hadamard states. \parencite[227]{KhavkineMoretti2015}
\end{quote}

I will have more to say about the sense in which the equivalence principle should (and should not) be taken to motivate the Hadamard condition in \S\ref{sec:equivalenceagain}. But for now, I would like to raise two worries about this kind of argument. The first has to do with the version of the equivalence principle which is being invoked. In the context of classical field theory, the strong equivalence principle (SEP) is usually understood as asserting a certain relationship between, on the one hand, the dynamics of a general-relativistic matter theory set on an arbitrary relativistic spacetime $(M,g)$, and on the other hand, those same dynamics set on Minkowski spacetime \parencite{Brown2005,FletcherWeatherallpart2,Readetal2018,March2025BPhil,GhinsBudden2001}. The theory is said to satisfy SEP just in case those dynamics agree with one another at a point whenever the two metrics agree, to first order, under the action of some (local) diffeomorphism there (and hence `locally approximate' one another in sufficiently small neighbourhoods of that point---see \textcite[\S\S 5.1, 5.3]{March2025BPhil} for details and further discussion of this way of explicating SEP). In other words, in the context of classical field theory, SEP has something to do with the \textit{dynamics} of matter fields, whereas the version of the equivalence principle at issue in the claim that the singularity structure of Hadamard states `locally approximates' that of the Minkowski vacuum does not.\footnote{One might suggest that perhaps this means that we need a more expansive understanding of SEP, which would apply to structures other than dynamics, e.g., the singularity structure of a distribution or suchlike. I am not unsympathetic to this view, and, as will become clear in \S\ref{sec:equivalenceagain}, my criticisms of the equivalence principle argument for the Hadamard condition will not ultimately turn on it. The point here is just that, in the absence of further explication of SEP, it is not clear from the way that it is applied in the context of classical field theory that it has anything to do with the claim that the singularity structure of Hadamard states `locally approximates' that of the Minkowski vacuum.} 

Now, one might reply that the dynamics of the matter fields in question do still enter into the definition of Hadamard distributions (and Hadamard states)---in fact, in two obvious ways---through the condition that $\Delta_H^{[ab]}=\Delta_P^{ab}$, and through the appeal to the fact that these dynamics are linear when carrying out the evaluation of observables at $\Phi=\mathbf{0}$. In particular, the causal propagator $\Delta_P^{ab}$ can be thought of as encoding the space of (classical) solutions of $P$ via the exact sequence \eqref{fig:exactcausalgreenfunction}, and, of course, there is a close relationship between the dynamics of a classical field theory and the space of solutions of that theory---indeed, a classical field equation can be identified with the space of `pointwise' solutions of that equation, see \textcite{MarchWeatherallnaturaltheories}. But this does not seem like a route which is open to the proponent of the equivalence principle motivation for the Hadamard condition; first, because the space of solutions of the field equation at issue enters into the construction is related to the singularity structure of Hadamard states in a slightly oblique way (the singularity structure of $\Delta_P$ constrains that of $\Delta_H$ only through its antisymmetric part, and WFSSC must be imposed as an additional condition anyway), and second, and more importantly, because the construction has nothing to do with whether the dynamics $P$ under consideration satisfy SEP (in the above sense) or not. As far as the fact that the evaluation of observables in a Hadamard state is carried out at the zero section (which, to repeat, is a solution of the classical field theory) at each spacetime, one might suggest that a `solution-wise' version of SEP---which required that, if $\mathscr{P}$ is locally covariant, then \textit{every} solution of $\mathscr{P}(M,g)$ for any $(M,g)$ locally approximates, and is locally approximated by, a solution of $\mathscr{P}(M',g')$ for any other $(M',g')$ (in the neighbourhood of any points) \parencite{FletcherWeatherallpart2,March2025BPhil}---would be sufficient for the property that, for linear theories, the classical vacuum states associated with any two spacetimes `locally approximate' one another, and so any two Hadamard states can be thought of as related to one another (modulo the non-uniqueness of Hadamard distributions) via SEP in this way. This is true, but besides the point, because the classical vacuum field configurations of a linear locally-covariant theory on any two spacetimes will \textit{always} `locally approximate' one another in this sense.\footnote{This is because, if $\mathscr{E}:\mathcal{G}^+_n\rightarrow\mathcal{FB}$ is a natural vector bundle, then the zero section of this bundle at each spacetime is natural, in that it is an arrow from the terminal object of the category of natural bundles to $\mathscr{E}$---see \textcite[\S 2.2]{March2025BPhil}.} SEP is not needed, beyond the fact that it enforces local covariance. 

The second worry has to do with the relationship between this motivation and the ways in which the Hadamard condition is used in the literature. At a first pass, the problem here is that it is not clear what the equivalence principle (or, more precisely, the version of it which is supposed to apply to the singularity structure of the Minkowski vacuum) has to do with, e.g., well-definedness of the expectation value of the stress--energy operator, or Wick polynomials, which are the main \textit{technical} reasons the Hadamard condition is imposed. Now, to be clear, I am not claiming that one ought to demand that this relationship be somehow `obvious' or \textit{a priori} (whatever that would mean). It is, of course, entirely possible for foundational principles and conditions to be interrelated in non-obvious ways. Rather, the point is that one would like some clarity on what, exactly, the equivalence principle is doing for us in ensuring well-definedness of the expectation value of the stress--energy operator, Wick polynomials, etc., which is not simply to restate the fact that a distribution `having the right singular structure' is related to the well-definedness of products involving that distribution. The latter is unobjectionable, but it does not shed light on the role of the equivalence principle (since the criterion of `having the right singular structure' could just be applied at each spacetime $(M,g)$ directly).

\subsection{Pragmatic considerations}\label{sec:itsallpragmaticsisntit}

This takes us on to a second family of possible motivations for the Hadamard condition, which are more pragmatic in character. On this approach, one begins by noting that there are many theoretical contexts in which it is incredibly useful to impose the Hadamard condition. To give a non-exhaustive list (and to recap \S\ref{sec:intro}): Hadamard states permit the definition of the expectation value of the stress--energy operator, which is needed in discussions of semi-classical gravity (where it enters into the left hand side of the semi-classical Einstein's equation); relatedly, the Hadamard condition is used in derivations of quantum energy inequalities; the Hadamard condition is also important in derivations of Hawking radiation and the Hawking temperature, where it enters into the definition of the relevant `vacuum-like' state of a black hole spacetime; the Hadamard condition permits the definition of Wick polynomials and (in turn) time-ordered products and the perturbative S-matrix in curved spacetime, and so is relevant to discussions of renormalization, and so on. One then argues that the Hadamard condition is motivated---and should be thought of as a necessary condition on `physically reasonable' states of the quantum field---precisely by the fact that the condition, e.g., ensures well-definedness of these quantities, permits these derivations, etc.

Justifications for the Hadamard condition with this kind of flavour are also abundant in the literature. An early example can be found in \textcite{Fullingetal1978}, who write:
\begin{quote}
    All these considerations [having to do with the possibility of renormalizing the stress--energy operator] suggest that the validity of [the Hadamard condition] be regarded as a basic criterion for a ``physically reasonable" state, perhaps even as the definition of that phrase. \parencite[259]{Fullingetal1978}
\end{quote}
A similar idea is picked up by \textcite{KayWald1991}:
\begin{quote}
    It turns out that the set of all quasifree states is considerably too large in that it includes, in particular, states for which the ``expectation value of the energy—momentum tensor operator" is infinite or indefinable. We shall adopt, here, the viewpoint [...]\ that the only physically acceptable states are the ``Hadamard states" [...]. For such states, the expectation value of the stress—energy operator can be defined in a completely satisfactory manner via the point-splitting prescription [...]. \parencite[82]{KayWald1991}
\end{quote}

This line of argument also continues into the contemporary literature on the Hadamard condition. Here are two representative passages:
\begin{quote}
    For many purposes, it can be argued that Hadamard states form the preferred class of physical states on a linear scalar quantum field theory [...]. In particular, it is for this class of states that the expectation value of non-linear observables, such as the (renormalized) stress–-energy tensor can be evaluated. \parencite[1--2]{JanssenVerch2023}
\end{quote}
\begin{quote}
    Hadamard states play an important role in computations, for they permit the evaluation of Wick polynomials, including the stress--energy tensor, and time-ordered products [...]. [M]any results can be proved for general Hadamard states on general spacetime backgrounds. Examples include quantum energy inequalities [...], no-go results concerning chronology violating spacetimes [...] and existence results for the semiclassical Friedmann equations [...]. Hadamard states also play an important role in understanding why black holes display the Hawking temperature [...]. \parencite[2]{FewsterVerch2013}
\end{quote}

I do not have any objection to this kind of strategy \textit{per se}, and my preferred answer to the question `why Hadamard states?'\ will also have a lot to do with well-definedness of certain quantities that we take to be `physical'. But even so, it would be desirable to have some clarity on what, exactly, it is that the condition allows us to do which we could not do otherwise, and how, exactly, this is supposed to underwrite the claim that `physically reasonable' states of the quantum field are necessarily Hadamard, rather than simply enumerating the various theoretical applications in which the condition has proved useful. As an example, consider the fact that the expectation value of the stress-energy operator (e.g., of the Klein--Gordon field) is well-defined for Hadamard states, whereas it is not well-defined in general \parencite{Wald1977,Wald1978,Wald1994}. Does this mean that only Hadamard states are `physically reasonable', since the expectation value of the stress-energy operator ought to be well-defined, \textit{qua} physical quantity? Or does it mean that we need a different understanding of what the relevant `physical' quantities are (or different mathematics to describe those quantities)?

Another version of this worry is especially pressing in the contexts of derivations of e.g., Hawking radiation or the Hawking temperature which make use of the Hadamard condition \parencite{KayWald1991,JanssenVerch2023}. It is one thing to observe that we might be motivated to assume the Hadamard condition in these contexts because it facilitates the derivations in question. But if one wants to go further, and claim that these derivations provide support for novel theoretical phenomena---say, the existence of Hawking radiation, or that black holes display the Hawking temperature---as genuine \textit{physical} effects (as is mainstream in contemporary physics, and to be clear, I have no gripes with this---though see \textcite{Ryder} for further critical discussion of derivations of Hawking radiation), then the argument for imposing the Hadamard condition because it is useful (e.g., in proving certain theorems, carrying out certain derivations) to do so deserves a little more scrutiny. In particular, it would certainly be problematic for this kind of claim if the assumption that `physically reasonable' states of the quantum field are necessarily Hadamard rested in part on the fact that the Hadamard condition allows us to prove \textit{these theorems} or to carry out \textit{these derivations in particular}, and so one would like some reassurance that this is not the case. Relatedly, but on a somewhat different note, it is also unclear how these kind of pragmatic considerations are supposed to support the intended interpretation of Hadamard states as `vacuum-like'---to which these derivations often appeal. Somewhat more generally: if the worry with which we ended the last section is that the motivation via the equivalence principle had too little to do with the ways in which the Hadamard condition is applied in practice, the worry here is that this motivation risks becoming too close to the ways in which the Hadamard condition is applied in practice to underwrite the physical and theoretical significance that these applications of condition are supposed to have.

\section{From operator products to the Hadamard condition}\label{sec:operatorstohadamard}

So much for what should not be taken to motivate the Hadamard condition. To get to what I \textit{do} think should be taken to motivate the condition, I want to return to our discussion of operator products in \S\ref{sec:prelims}.\footnote{This idea is similar to arguments made by e.g., \textcite{Duetsch,FredenhagenRejzner2015} who discuss one direction of the theorem I will rely on.} In particular, recall that two of the conditions mentioned in the discussion of operator products in \S\ref{sec:prelims}---that the antisymmetric part of $\Pi$ is $\Delta_P$, and that $\Pi$ is a distributional bisolution of $P$---which we identified as necessary and sufficient for $\star_\Pi$ to be a deformation quantization of the Poisson--Peierls bracket and to descend to a star product on the on-shell observables, are in fact the first and third conditions in our definition of Hadamard distributions. We also saw that defining an operator product on a larger space of observables than $\mathrm{PolyObs}_\mathrm{reg}(E)[[\hbar]]$ required that care be taken over the singular structure of the bi-distribution $\Pi$ in question---and that choosing $\Pi$ to be $\mathcal{V}^+$-Hadamard for $P$ was sufficient to guarantee well-definedness of the $\star_\Pi$ on the space $\mathrm{PolyObs}_{\mathrm{mc},\mathcal{V}}(E)[[\hbar]]$. This suggests that well-definedness of an operator product on $\mathrm{PolyObs}_{\mathrm{mc},\mathcal{V}}(E)[[\hbar]]$ might provide an alternative motivation for the Hadamard condition. In fact, we have the following equivalence result:
\begin{thm}\label{thm:operatorstohadamard}
    Let $\Delta_H$ be a distributional section of $E\boxtimes E$, and let $(E,L)$ be a $\mathcal{V}^\pm$-decomposable Green hyperbolic free Lagrangian field theory with Euler-Lagrange operator $P$. The following conditions are equivalent:
    \begin{itemize}
        \item[i.] $\Delta_H$ is $\mathcal{V}^+$-Hadamard for $P$.
        \item[ii.] The unital associative algebra product $\star_{\Delta_H}$ induced by $\Delta_H$ has the following properties:
        \begin{enumerate}
            \item It is well-defined on $\mathrm{PolyObs}_{\mathrm{mc},\mathcal{V}}(E)$, in the sense that H\"ormander's criterion is satisfied;
            \item It is a star product on $\mathrm{PolyObs}_{\mathrm{mc},\mathcal{V}}(E)$;
            \item It descends to a star product on $\mathrm{PolyObs}_{\mathrm{mc},\mathcal{V}}(E,L)$;
            \item It is a deformation quantization of the Poisson-Peierls bracket; and
            \item $\int_{M\times M}\Delta_H(x,y)\Phi(x)\Phi(y)$ is a seminorm on the space of sections $\Gamma_c(E^*\otimes T^*M)$.
        \end{enumerate}
    \end{itemize}
\end{thm}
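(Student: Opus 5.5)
The plan is to prove the two directions separately, matching each clause of (ii) to a clause of Definition~\ref{def:Hadamard}. Properties 2--4 of (ii) have already been characterized in \S\ref{sec:prelims}. By the footnote computation, $\star_\Pi$ is a star product iff $\Pi^{ab}(x,y)^*=\Pi^{ba}(y,x)$, i.e.\ iff the antisymmetric part of $\Pi$ is purely imaginary. Given that, $\star_\Pi$ is a deformation quantization of the Poisson--Peierls bracket iff the antisymmetric part equals $\tfrac{i}{2}\Delta_P$ (up to the paper's normalization of condition 1). Finally, $\star_\Pi$ descends to $\mathrm{PolyObs}_{\mathrm{mc},\mathcal{V}}(E,L)$ iff $\Pi$ is a bi-solution, since this is exactly when the image of $P$ is a two-sided ideal. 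These equivalences were stated for regular observables. I would first check that they persist on $\mathrm{PolyObs}_{\mathrm{mc},\mathcal{V}}(E)$ once the product is well-defined: the first-order term in $\hbar$ of \eqref{eq:expandedstarpprod} already detects $\Pi$ when it is tested on linear observables with smooth coefficients, and these are microcausal. So conditions 1 and 3 of Definition~\ref{def:Hadamard} are equivalent to properties 2--4.

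For positivity versus property 5, I would note that $q(\Phi):=\int\Phi^*_a\Delta_H^{ab}\Phi_b$ is a Hermitian form, hence real, by the star-product symmetry. The map $\Phi\mapsto q(\Phi)^{1/2}$ is then a seminorm exactly when $q\ge 0$: the triangle inequality follows from Cauchy--Schwarz for a positive semi-definite form, and conversely a seminorm requires the square root to be real, which forces $q\geq 0$. This gives condition 4 $\Leftrightarrow$ property 5 directly, reading ``is a seminorm'' as ``its square root is a seminorm''.

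It remains to match WFSSC with property 1. For (i)$\Rightarrow$(ii), I would invoke the remark after Definition~\ref{def:Hadamard}. Each term of \eqref{eq:expandedstarpprod} is a partial product of $\Delta_H^{\otimes n}$ with coefficients $\alpha^{(i)}$, $\beta^{(j)}$. Definition~\ref{def:multivariabledistprod} needs the covectors to add to something nonzero. Under WFSSC, the covectors contributed at the $A_1$-side variables lie in $\mathcal{V}^+$ and those at the $A_2$-side lie in $\mathcal{V}^-$. A cancellation would require $\mathrm{WF}(\alpha^{(i)})$ to contain a tuple with all entries in $\mathcal{U}^-$ (resp.\ $\mathcal{U}^+$), possibly with some entries zero, and this is excluded by $\mathcal{V}$-microcausality.

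The main obstacle is the converse: from well-definedness on all of $\mathrm{PolyObs}_{\mathrm{mc},\mathcal{V}}(E)$, deduce $\mathrm{WF}(\Delta_H)\subset\mathcal{V}^+\times\mathcal{V}^-$. Here I would argue by contradiction using test observables. Since the antisymmetric part is $\tfrac i2\Delta_P$ and the theory is $\mathcal{V}^+$-decomposable, $\mathrm{WF}(\Delta_H)$ is contained in $(\mathcal{V}^+\times\mathcal{V}^-)\cup(\mathcal{V}^-\times\mathcal{V}^+)$ together with whatever the symmetric part contributes, using the fact that bi-solutions have wavefront set in the characteristic set. Suppose $(x,k;y,-l)\in\mathrm{WF}(\Delta_H)$ with $k\notin\mathcal{V}^+$. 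I would choose quadratic observables $A_1,A_2$ whose coefficients $\alpha^{(2)}$, $\beta^{(2)}$ are oscillatory distributions, e.g.\ $\delta$-type distributions along a hypersurface conormal chosen to have wavefront set containing $(x,-k;x',k')$ with $k'$ $\mathcal{V}$-spacelike. These remain microcausal, but the $\hbar^2$ term pairs $\Delta_H\otimes\Delta_H$ against them with covectors summing to zero at $(x,y)$, so H\"ormander's criterion fails. Getting the construction of such witnesses to be microcausal is the delicate step. If it fails for $k\in\mathcal{V}^-$, I would fall back on positivity together with the antisymmetry relation, in the style of Radzikowski/Strohmaier--Verch--Wollenberg, to exclude the $\mathcal{V}^-\times\mathcal{V}^+$ component.
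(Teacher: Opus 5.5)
Your treatment of (i)$\Rightarrow$(ii), and of the match between properties 2--5 and conditions 1, 3, 4 of Definition~\ref{def:Hadamard}, agrees with what the paper regards as already shown or immediate. The converse, from property 1 to WFSSC, has a genuine gap.

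Your witness does not violate H\"ormander's criterion. In the $\hbar^2$ term $\int\Delta_H(x,y)\Delta_H(x',y')\alpha^{(2)}(x,x')\beta^{(2)}(y,y')$ the covectors must cancel at all four points, not just at $(x,y)$. At $x'$ your $\alpha^{(2)}$ carries the nonzero spacelike covector $k'$. The second factor would therefore have to contribute a covector of $\mathrm{WF}(\Delta_H)$ with spacelike first slot $-k'$, which you do not have; ruling such covectors out is part of the claim. It also cannot contribute zero at $x'$, since $\alpha^{(2)}$ would then need the entry $(-k,\mathbf{0})$ with $-k$ causal, which microcausality forbids.

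In fact no property-1 argument aimed at a single bad covector can work. Suppose every covector in $\mathrm{WF}(\Delta_H)$ has second slot in $\mathcal{V}^-$. Then in every order-$n$ term the $\beta$-side tuple is a nonzero element of $\mathcal{U}^+\times\dots\times\mathcal{U}^+$, which never lies in the wavefront set of a microcausal coefficient. So property 1 holds whatever the first slots are, whether spacelike, zero, or in $\mathcal{V}^-$. Your preliminary bound, via `whatever the symmetric part contributes' and the characteristic set, does not close this, since it still allows zero slots and unrelated pairs of causal covectors.

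What is missing is the paper's step (c). You pair \emph{two} distinct elements $(x,k_x;y,k_y)$ and $(z,k_z;u,k_u)$ of $\mathrm{WF}(\Delta_H)$ that are oppositely oriented in both slots. Coefficients whose wavefront sets contain $(x,-k_x;z,-k_z)$ and $(y,-k_y;u,-k_u)$ are microcausal precisely because their entries are mixed, and the second-order term is then ill-defined. Spacelike and zero slots fall to the same device once you use the partner covectors forced by properties 2 and 4; for example, hermiticity puts $(y,-k_y;x,-k_x)$ in $\mathrm{WF}(\Delta_H)$. The paper then uses $\Delta_H^{ab}-\Delta_H^{ba}\propto i\Delta_P$, together with the resulting disjointness of $\mathrm{WF}(\Delta_H)$ from its transpose, to get $\mathrm{WF}(\Delta_H)\subseteq\mathrm{WF}(\Delta_P)$ (steps (d)--(e)).

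That only confines $\mathrm{WF}(\Delta_H)$ to one of the two halves, and here your positivity fallback is essential rather than optional. If $\Delta_H$ is Hadamard, then $\Pi^{ab}(x,y):=-\Delta_H^{ba}(y,x)$ is a hermitian bisolution with the same antisymmetric part. It satisfies property 1 with $\mathrm{WF}(\Pi)\subset\mathcal{V}^-\times\mathcal{V}^+$ and violates only property 5. So no test-observable argument can exclude $k\in\mathcal{V}^-$, and the positivity argument in the style of Radzikowski and Strohmaier--Verch--Wollenberg has to be carried out in full. The paper's own steps (a)--(e) never use positivity, so they leave this orientation question open as well.
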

\begin{proof}
    It suffices to show that WFSSC implies (1) and that (1), (4), and $\mathrm{WF}(\Delta_H^{ab})\subseteq \mathrm{WF}(\Delta_H^{[ab]})$ jointly imply WFSSC (the remainder of the implications have already been shown or are obvious). That WFSSC implies (1) is immediate from definition \ref{def:multivariabledistprod}. So it remains to show that (1), (4), and $\mathrm{WF}(\Delta_H^{ab})\subseteq \mathrm{WF}(\Delta_H^{[ab]})$ jointly imply WFSSC. We show this in the following steps:
    \begin{itemize}
        \item[a.] (1) implies that $\mathrm{WF}(\Delta_H^{ab})$ contains no $\mathcal{V}^\pm$-spacelike covectors (in either argument).
        \item[b.] (1) implies that $\mathrm{WF}(\Delta_H^{ab})$ contains no covectors of the form $(x,\mathbf{0};y,k_y)$ or $(x,k_x;y,\mathbf{0})$.
        \item[c.] (1) implies that for any $(x,k_x;y,k_y)\in\mathrm{WF}(\Delta_H^{ab})$, there is no $(x,k'_x;y,k'_y)\in\mathrm{WF}(\Delta_H^{ab})$ such that $k_x$ and $k'_x$ are $\mathcal{V}^\pm$ oppositely oriented and $k_y$ and $k'_y$ are $\mathcal{V}^\pm$  oppositely oriented.
        \item[d.] (4) implies that $\mathrm{WF}(\Delta_P)\subseteq \mathrm{WF}(\Delta_H^{ab})\cup \mathrm{WF}(\Delta_H^{ba})$.
        \item[e.] (4) and (1) implies that $\mathrm{WF}(\Delta_H^{ab})=\mathrm{WF}(\Delta_P)\cap\mathrm{WF}(\Delta_H^{ab})$ and hence that $\mathrm{WF}(\Delta_H^{ab})\subseteq\mathrm{WF}(\Delta_P)$
    \end{itemize}
    From this the implication follows, since, combining (c) and (e), we have $\mathrm{WF}(\Delta_H^{ab})\subset\mathcal{V}^+\times\mathcal{V}^-$, i.e., we have WFSSC. We now take (a)--(e) in turn. 

    Beginning with (a), we note from the series expansion \eqref{eq:expandedstarpprod} of the star product induced by $\Delta_H$ that (1) implies that the product of distributions
    $\alpha^{(1)}_1(x)\Delta_H(x,y)\alpha^{(1)}_2(y)$ is well-defined in the sense of H\"ormander's criterion for any $A_1, A_2\in\mathrm{PolyObs}_{\mathrm{mc},\mathcal{V}}(E)$. By definition of $\mathrm{PolyObs}_{\mathrm{mc},\mathcal{V}}(E)$, $\mathrm{WF}(\alpha_1^{(1)})$, $\mathrm{WF}(\alpha_2^{(1)})$ are either empty or contain only $\mathcal{V}^\pm$-spacelike covectors. So now suppose for contradiction that there exist $(x,k_x;y,k_y)\in\mathrm{WF}(\Delta_H(x,y))$ with $k_x$ and $k_y$ $\mathcal{V}^\pm$-spacelike. By \parencite[theorem 8.1.4]{Hoermander1}, there exist microcausal polynomial observables $A_1,A_2$ with $(x,-k_x)\in \mathrm{WF}(\alpha_1^{(1)}(x))$ and $(y,-k_y)\in \mathrm{WF}(\alpha_2^{(1)}(y))$ (simply let $A_1,A_2$ be linear and let $\mathrm{WF}(\alpha_1^{(1)})(x)$, $\mathrm{WF}(\alpha_2^{(1)})(y)$ consist of all and only positive scalar multiples of $k_x,k_y$ respectively). In this case, H\"ormander's criterion for the existence of the product $\alpha^{(1)}_1(x)\Delta_H(x,y)\alpha^{(1)}_2(y)$ is not met.

    For (b) and (c), we note again from the series expansion \eqref{eq:expandedstarpprod} of the star product induced by $\Delta_H$ that (1) implies that the product of distributions
    $\Delta_H(x,y)\Delta_H(z,u)\alpha^{(2)}_1(x,z)\alpha^{(2)}_2(y,u)$ is well-defined in the sense of H\"ormander's criterion for any $A_1, A_2\in\mathrm{PolyObs}_{\mathrm{mc},\mathcal{V}}(E)$. Evaluating each partial product in steps, we note that $\Delta_H(x,y)\alpha^{(2)}_1(x,z)$ is always well-defined according to definition \ref{def:multivariabledistprod} by the definition of a $\mathcal{V}^\pm$-microcausal observable, and likewise for $\Delta_H(x,y)\alpha^{(2)}_1(x,z)\alpha^{(2)}_2(y,u)$. This has wavefront set constrained by 
    \begin{align*}
        \mathrm{WF}(\Delta_H(x,y)&\alpha^{(2)}_1(x,z)\alpha^{(2)}_2(y,u)) \subseteq \\
        \{ (z,k_z;u,k_u)|& ((x,k_x;y,k_y)\in \mathrm{WF}(\Delta_H(x,y))\;\mathrm{and}\; (x,-k_x;z,k_z)\in \mathrm{WF}(\alpha^{(2)}_1)\;\mathrm{and}\; (y,-k_y;u,k_u)\in \mathrm{WF}(\alpha^{(2)}_2))\\
        &\mathrm{or}\; (k_z=\mathbf{0}\;\mathrm{and}\;(x,\mathbf{0};y,k_y)\in\mathrm{WF}(\Delta_H(x,y))\;\mathrm{and}\; (y,-k_y;u,k_u)\in \mathrm{WF}(\alpha^{(2)}_2))\\
        &\mathrm{or}\; (k_u=\mathbf{0}\;\mathrm{and}\;(x,k_x;y,\mathbf{0})\in\mathrm{WF}(\Delta_H(x,y))\;\mathrm{and}\; (x,-k_x;z,k_z)\in \mathrm{WF}(\alpha^{(2)}_1))\},
    \end{align*}
    where we have again made use the definition of a $\mathcal{V}^\pm$-microcausal polynomial observable. Thus $\Delta_H(x,y)\Delta_H(z,u)\alpha^{(2)}_1(x,z)\alpha^{(2)}_2(y,u)$ is in general well-defined in the sense of H\"ormander iff $\{(z,k_z;u,k_u)|(z,-k_z;u,-k_u)\in \mathrm{WF}(\Delta_H(z,u))\}$ does not intersect the latter set. From the second and third disjuncts and again making use of \parencite[theorem 8.1.4]{Hoermander1} we see that $\mathrm{WF}(\Delta_H(z,u)$ contains no covectors of the form $(z,\mathbf{0};u,k_u)$ or $(z,k_z;u,\mathbf{0})$, giving us (b). For the first disjunct we see by definition of a $\mathcal{V}^\pm$-microcausal polynomial observable that for any such $(z,k_z;u,k_u)$ contained in this set with $k_z,k_u$ $\mathcal{V}^\pm$-causal (which is the only case that matters, by (a)), $k_z$ is $\mathcal{V}^\pm$-co-oriented with $k_u$ iff $k_x$ is $\mathcal{V}^\pm$-co-oriented with $k_y$. So (again, making use of \parencite[theorem 8.1.4]{Hoermander1}) well-definedness of $\Delta_H(x,y)\Delta_H(z,u)\alpha^{(2)}_1(x,z)\alpha^{(2)}_2(y,u)$ implies that for any $(z,k_z;u,k_u)\in\mathrm{WF}(\Delta_H(z,u))$, there is no $(z,k'_z;u,k'_u)\in\mathrm{WF}(\Delta_H(z,u))$ such that $k_z$ is $\mathcal{V}^\pm$ oppositely oriented to $k'_z$ and $k_u$ is $\mathcal{V}^\pm$ oppositely oriented to $k'_u$, which gives us (c).

    For (d), we have already shown that (3) implies that $\Delta_P^{ab}=1/2(\Delta_H^{ab}-\Delta_H^{ba})$, and so we have immediately $\mathrm{WF}(\Delta_P^{ab})=\mathrm{WF}(\Delta_H^{ab}-\Delta_H^{ba})=\mathrm{WF}(\Delta_H^{ab})\cup\mathrm{WF}(\Delta_H^{ba})$.

    Finally for (e), we note that (c) implies that $\mathrm{WF}(\Delta_H^{ab})\cap\mathrm{WF}(\Delta_H^{ba})=\varnothing$. Moreover, we know that $\Delta_P^{ab}=1/2(\Delta_H^{ab}-\Delta_H^{ba})$. It follows that
    \begin{align*}
        \mathrm{WF}(\Delta_H^{ab})&=\mathrm{WF}(\Delta_P+\Delta_H^{ba})\\
        &=(\mathrm{WF}(\Delta_P)\cup\mathrm{WF}(\Delta_H^{ba}))\cap\mathrm{WF}(\Delta_H^{ab})\\
        &=\mathrm{WF}(\Delta_P)\cap\mathrm{WF}(\Delta_H^{ab}).
    \end{align*}
    and hence that $\mathrm{WF}(\Delta_H^{ab})\subseteq\mathrm{WF}(\Delta_P)$.
\end{proof}

In other words, any unital associative algebra product on $\mathrm{PolyObs}_{\mathrm{mc},\mathcal{V}}(E)[[\hbar]]$ satisfying the conditions (1)--(5) is necessarily Hadamard, and conversely, any Hadamard distribution defines a star product meeting these conditions. The question, `why Hadamard states?' therefore can be seen as equivalent to the question `why impose conditions (1)--(5) (and, for Hadamard states, then evaluate at the zero section)?', which, I claim, we have better control over. In some detail:
\begin{itemize}
    \item (2) is just the requirement that the algebra of on-shell polynomical microcausal observables is indeed a $*$-algebra.
    \item (3) amounts to requiring that we can consider the $*$-algebra of on-shell polynomical microcausal observables induced by $\Delta_H$ as a locally-covariant QFT in its own right.
    \item (4) is, plausibly, just what it means for the $*$-algebra on $\mathrm{PolyObs}_{\mathrm{mc},\mathcal{V}}(E)$ induced by $\Delta_H$ to be a quantization of the classical field theory $(E,L)$.
    \item (5) is a necessary condition for evaluation at the zero section to define a state on the the algebra of (on-shell) polynomical microcausal observables, i.e., for $\mathrm{PolyObs}_{\mathrm{mc},\mathcal{V}}(E)$ to admit classical vacuum-like states.\footnote{To see this, observe that if $A\in\mathrm{PolyObs}_{\mathrm{mc},\mathcal{V}}(E)$, then
    \begin{align*}
        (A^*&\star_{\Delta_H}A)(\mathbf{0})=\alpha^{(0)*}\alpha^{(0)}+ \hbar\int_{M\times M}\Delta_H^{ab}(x,y)\alpha^{(1)*}_a(x)\alpha^{(1)}_b(y)\mathrm{d}V_g(x)\mathrm{d}V_g(y) + \\ & \hbar^2\int_{M\times M}\int_{M\times M}\Delta_H^{ab}(x,y)\Delta_H^{cd}(z,u)\alpha^{(1)*}_{ac}(x,z)\alpha^{(2)}_{bd}(y,u)\mathrm{d}V_g(x)\mathrm{d}V_g(y)\mathrm{d}V_g(z)\mathrm{d}V_g(u)+...\;.
    \end{align*}
    Now consider the special case where $A$ is regular. If the map $\langle\;\cdot\;\rangle_\mathbf{0}:\mathrm{PolyObs}_{\mathrm{mc},\mathcal{V}}(E)[[\hbar]]\rightarrow\mathbb{C}[\hbar]$ defined via $\langle A\rangle_\mathbf{0}=A(\mathbf{0})$ is to define a state on $\mathrm{PolyObs}_{\mathrm{mc},\mathcal{V}}(E)[[\hbar]]$, then positivity of $\langle\;\cdot\;\rangle_\mathbf{0}$ requires in particular that $\int_{M\times M}\Delta_H^{ab}(x,y)\alpha^{(1)*}_a(x)\alpha^{(1)}_b(y)\mathrm{d}V_g(x)\mathrm{d}V_g(y)\geq 0$. Since, for regular observables, $\alpha_a^{(1)}$ is just a compactly-supported section of $E^*$, this implies condition (4).}
    \item The zero section of $E$ is just the classical vacuum field configuration, and so `consists in evaluating all observables at the zero section' is one (perhaps, \textit{the}) obvious explication of what it is for a state to be `vacuum-like'. One might respond (both to this and the previous point) by asking: why should we expect (or want) a quantum field theory to admit any `vacuum-like' states at all? I do not have a knock-down counterargument to this, but plausibly, admitting states which are vacuum-like in \textit{some} sense is an adequacy condition on what it is for a quantum field theory to be \textit{free}.
\end{itemize}
This leaves (1). Why require well-definedness of $\star_{\Delta_H}$ on the space $\mathrm{PolyObs}_{\mathrm{mc},\mathcal{V}}(E)[[\hbar]]$? In fact, there are really two questions here, which involve rather different answers. First, why not consider a smaller space, such as $\mathrm{PolyObs}_\mathrm{reg}(E)[[\hbar]]$? And second, why not consider a larger space, such as $\mathrm{PolyObs}(E)[[\hbar]]$? 

Beginning with the former, ultimately, I take this to be a question of what observables the practice of our most empirically-successful mathematical and theoretical physics requires us to consider. But here, I take the case in favour of $\mathrm{PolyObs}_{\mathrm{mc},\mathcal{V}}(E)[[\hbar]]$ to be fairly decisive. To see this, note that field (Wick) polynomials routinely considered in QFT calculations such as $\Phi^2$, $\Phi^4$, etc., are microcausal polynomial observables, with e.g., for $\Phi^2$, $\alpha_{ab}^{(2)}(x,y)=\delta(x,y)(\cdot\,,\cdot)_{ab}$, etc., where we assume a fibrewise hermitian inner product $(\cdot\,,\cdot)_{ab}$ on $E$, and $\mathrm{WF}(\delta(x,y)(\cdot\,,\cdot)_{ab})=\{(x,k_x;y,k_y)|(x,k_x;y,k_y)\in T^*M\times T^*M\;\mathrm{and}\; x=y, k_y=-k_x\}$. In fact, $\alpha_{ab}^{(2)}(x,y)=\delta(x,y)(\cdot\,,\cdot)_{ab}$ is `as singular as $\alpha^{(2)}$ can be over $\mathrm{SingSupp}(\alpha^{(2)})$' whilst still being microcausal. 

Now, of course, there is a certain degree of arbitrariness in the choice of $\mathrm{PolyObs}_{\mathrm{mc},\mathcal{V}}(E)[[\hbar]]$ here---one could impose more stringent requirements on the singularity structure of the polynomial observables considered which still allowed for non-linear field polynomials such as $\alpha_{ab}^{(2)}(x,y)=\delta(x,y)(\cdot\,,\cdot)_{ab}$, and some authors do (e.g., \textcite{Duetsch}). But there is an important sense in which this does not matter, at least for the purposes of the present argument, because requiring well-definedness on any subspace of $\mathrm{PolyObs}_{\mathrm{mc},\mathcal{V}}(E)[[\hbar]]$ which includes polynomial observables with $\alpha_{ab}^{(2)}(x,y)=\delta(x,y)(\cdot\,,\cdot)_{ab}$ would also impose the same constraints on $\mathrm{WF}(\Delta_H)$ (at least if one also requires (3)). (This can be checked directly by substituting $\alpha_{ab}^{(2)}(x,y)=\delta(x,y)(\cdot\,,\cdot)_{ab}$ into the proof of theorem \ref{thm:operatorstohadamard}.) One might object to this (on, say, effective field theory-inspired grounds) that observables consisting of products of field variables taken at a point are somehow `unphysical' and that we should really be considering smeared products of field variables (by some appropriate smooth, multivariate delta distribution-approximating function). I will just note that even if this is the case of observables we expect to be able to measure in the lab, working perturbatively requires us to consider, as observables, interaction terms in the Lagrangian which do involve taking products of field variables at a point, which enter into e.g., the perturbative S-matrix. 

Turning now to the latter question, this comes back to condition (3). Since $\Delta^{ab}_P=\Delta^{[ab]}_H$ requires, in particular, that $\mathrm{WF}(\Delta^{ab}_P)\subset\mathrm{WF}(\Delta^{ab}_H)\cup\mathrm{WF}(\Delta^{ba}_H)$, we cannot make $\mathrm{WF}(\Delta^{ab}_H)$ smaller than what is imposed by WFSSC (as working on a space of polynomial observables with a less-constrained singularity structure than $\mathrm{PolyObs}_{\mathrm{mc},\mathcal{V}}(E)[[\hbar]]$ would require) whilst maintaining (3). So again, given that (3) is plausibly just what it means to have a quantization of the classical field theory $(E,L)$, and providing $\Delta^{ab}_P$ is singular, we cannot use a larger algebra than $\mathrm{PolyObs}_{\mathrm{mc},\mathcal{V}}(E)[[\hbar]]$. 

It remains to show how theorem \ref{thm:operatorstohadamard} avoids the other problems raised in \S\ref{sec:why}. The first of these has to do with the relationship to well-definedness of the expectation value of the stress--energy operator, Wick polynomials, etc. This is reasonably straightforward. The stress--energy operator is, in particular, a non-linear observable constructed out of polynomials of fields and their derivatives (e.g., for Klein--Gordon theory, $T^{\mu\nu}=\nabla^\mu\Phi\nabla^\nu\Phi-1/2g^{\mu\nu}(\nabla_\lambda\Phi\nabla^\lambda\Phi-m^2\Phi^2)$). Fix a (local) coordinate system on some $U\subset M$. Any component of $T^{\mu\nu}$ on $U$ relative to this coordinate system can be expressed as a sum of terms of the form $(Q\circ\Phi)^2$, where $Q$ is some differential operator. Now let $f$ be any smooth function with compact support on $U$. Then $Q_f(\Phi)=\int_Mf(x)(Q\circ\Phi)(x)dV_g(x)$ is a local observable and hence a microcausal polynomial observable (see, e.g., \textcite[39]{FredenhagenRejzner2015}), and the smeared expectation value of the stress--energy operator (by $f$) in some Hadamard state can be expressed as a sum of operator products of microcausal polynomial observables $Q_f^*\star_{\Delta_H} Q_f$ evaluated at $\Phi=\mathbf{0}$. Theorem \ref{thm:operatorstohadamard} implies that not only is each such quantity $Q_f^*\star_{\Delta_H} Q_f$ well-defined, but also that the Hadamard condition can be motivated precisely by requiring all products of operators of this form to be well-defined (since the local observables are a dense subspace of $\mathrm{PolyObs}_{\mathrm{mc},\mathcal{V}}(E)$, at least for $\mathcal{V}=\mathcal{N}$). As for Wick polynomials, the case $\Phi^2$ has already been discussed in detail, so I will just observe that the construction of higher-order Wick polynomials out of $\Phi^2$ requires precisely well-definedness of operator products involving $\Phi^2$. In this way, theorem \ref{thm:operatorstohadamard} is closely connected to the technical reasons the Hadamard condition is imposed in practice, but also gives us sufficient control over what the Hadamard condition is doing in these applications to underwrite the physical and foundational significance the condition is supposed to have.

The second has to do with the sense in which the Hadamard condition is motivated by the desire to generalize the Minkowski vacuum state. Condition (4) substantially clarifies this, though in a slightly different way from how one might have expected. In particular, (4) invites us to consider Hadamard states not, in the first instance, as a generalization of the Minkowski vacuum state, but as quantum `analogues' of the \textit{classical} vacuum field configuration (which are then related to the Minkowski vacuum insofar as the Minkowski vacuum is also such a state). But, I suggest, this makes the sense in which Hadamard states are `vacuum-like' more transparent than the standard presentation of Hadamard states as direct generalizations of the Minkowski vacuum state. The classical vacuum field configuration $\mathbf{0}$ is functorial, and so we can control over what a `quantum analogue' of the classical vacuum field configuration should look like at every object $(M,g)$ in $\mathcal{G}^+_n$ in a uniform way, whereas we cannot seek a functorial extension of the Minkowski vacuum to every object $(M,g)$ in $\mathcal{G}^+_n$ (since no such states exist---I will return to this in the next section).

\section{The equivalence principle, reprise}\label{sec:equivalenceagain}

Having discussed in some detail various `standard' motivations for the Hadamard condition which I do not think work, and what I do think should be taken to motivate the condition, I now want to return to the discussion of the equivalence principle in \S\ref{sec:equivalence}. The aim of this section will be to show that, whilst this motivation for the Hadamard condition can be fleshed out in a mathematically precise way, it still does not work as an argument for imposing the Hadamard condition after the fact. I will take as my starting point March's (\citeyear{March2025BPhil}) definition of SEP (for further discussion of this, and its relationship to other explications of SEP in the literature, see \textcite{March2025BPhil}), amended here slightly to reflect the use of the category $\mathcal{G}^+_n$:
\begin{defn}[Strong equivalence principle]\label{def:SEP1}
    Let $\mathscr{E}:\mathcal{G}_n^+\rightarrow \mathcal{FB}$ be a natural bundle. Then $\mathscr{E}$ satisfies SEP iff for any objects $(M,g)$, $(M',g')$ in $\mathcal{G}_n^+$, any $p\in M$, $p'\in M'$ any $U\subset M$, $U'\subset M'$ containing $p$, $p'$ respectively, and any diffeomorphism $\varphi:U\rightarrow U'$ such that $\varphi(p)=p'$, $\varphi_*g(p')=g'(p')$, $\varphi_*\nabla(p')=\nabla'(p')$ (where $\nabla$, $\nabla'$ are the Levi-Civita connections for $g$, $g'$ respectively), $\varphi_*(\mathscr{E}M\cap\pi^{-1}_{\mathscr{E}M}(p))=\mathscr{E}M'\cap\pi^{-1}_{\mathscr{E}M'}(p')$.
\end{defn}
This captures the idea that the `equation' $\mathscr{E}$ satisfies SEP just in that equation, expressed relative to any two spacetimes, agrees at a point whenever the two metrics agree to first order (under the action of some local diffeomorphism) there.

The first point to note here is that this definition does not require that the natural `equation' $\mathscr{E}$ be something we usually think of as a `standard' example of a partial differential equation---what matters is just that $\mathscr{E}$ be a natural bundle (over some suitable category of Lorentzian manifolds). In particular, we could let $\mathscr{E}$ be the wavefront set of a distribution on each object $(M,g)$ in $\mathcal{G}^+_n$ (providing that those wavefront sets define a natural bundle). Now, this will not quite work for making sense of the idea of a wavefront set of a Hadamard distribution (for some $\mathscr{P}(M,g)$ at each $(M,g)$) satisfies SEP, since formally, the definition of a natural bundle only applies to bundles defined over $M$ (whilst $\mathrm{WF}(\Delta_H)$ is defined over $M\times M$), but the fix is straightforward. We start by amending our definition of a natural bundle to consider bundles defined over multiple copies of $M$:
\begin{defn}\label{def:naturalbundle2}
    Let $\mathcal{G}^+_n$, $\mathcal{FB}$ be as before. A $k$-natural bundle (over $\mathcal{G}^+_n$) is a functor $\mathscr{B}:\mathcal{G}^+_n\rightarrow\mathcal{FB}$ such that (i) for any object $(M,g)$ in $\mathcal{G}^+_n$, $\mathscr{B}(M,g)$ is a bundle with base space $\times^kM$, and (ii) for any morphism $\varphi:(M,g)\rightarrow (M',g')$ in $\mathcal{G}^+_n$, $\mathscr{B}\varphi$ is a smooth bundle morphism covering $\times^k\varphi$.
\end{defn}

This leads to the following associated definition of SEP:
\begin{defn}\label{def:SEP2}
    Let $\mathscr{E}:\mathcal{G}_n^+\rightarrow \mathcal{FB}$ be a $k$-natural bundle. Then $\mathscr{E}$ satisfies SEP iff for any objects $(M,g)$, $(M',g')$ in $\mathcal{G}_n^+$, any $p\in M$, $p'\in M'$ any $U\subset M$, $U'\subset M'$ containing $p$, $p'$ respectively, and any diffeomorphism $\varphi:U\rightarrow U'$ such that $\varphi(p)=p'$, $\varphi_*g(p')=g'(p')$, $\varphi_*\nabla(p')=\nabla'(p')$ (where $\nabla$, $\nabla'$ are the Levi-Civita connections for $g$, $g'$ respectively), $\varphi_*(\mathscr{E}M\cap\pi^{-1}_{\mathscr{E}M}(p,...,p))=\mathscr{E}M'\cap\pi^{-1}_{\mathscr{E}M'}(p',...,p')$
\end{defn}

We can now make precise sense of the idea that the Hadamard condition, at least in some special cases, can be motivated by appeal the a version of SEP applied to the singularity structure of the Minkowski vacuum. The proof relies on the following result due to \textcite{FletcherWeatherallpart1} (which I have translated into the notation used in this paper):
\begin{thm}[Fletcher and Weatherall, 2023]\label{thm:FletcherWeatherall}
    Let $(M,g)$, $(M',g')$ be two Lorentzian $n$-manifolds, let $p\in M$, $p'\in M'$ and let $\gamma':I'\rightarrow M'$ be a smooth embedded curve whose image contains $p'$. Then there exist a smooth embedded curve $\gamma:I\rightarrow M$ whose image contains $p$, neighbourhoods $U$, $U'$ of $p$, $p'$ respectively, and a diffeomorphism $\varphi:U'\rightarrow U$ such that $\varphi(p)=x$, $\varphi\circ\gamma'=\gamma$ on $I$, and along $\gamma[I]\cap U$, $\varphi_*g'=g$ and $\varphi_*\nabla'=\nabla$, where $\nabla$, $\nabla'$ are the Levi-Civita connections for $g$, $g'$ respectively.
\end{thm}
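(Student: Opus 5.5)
The plan is to build, near each curve, a generalized Fermi coordinate system adapted to that curve, and to take $\varphi$ as the composite of the two coordinate charts. All the content lies in arranging that the two charts have identical metric components and identical Christoffel symbols along the curve. Below, $\varphi$ maps $U'\to U$ with $p'\mapsto p$; the ``$\varphi(p)=x$'' in the statement is read as $\varphi(p')=p$.

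\textbf{Step 1: construct $\gamma$.} Parametrize $\gamma'$ so that $\gamma'(t_0)=p'$. Fix an orthonormal frame $(e'_i)$ at $p'$ and parallel transport it along $\gamma'$. Write $\dot\gamma'(t)=c^i(t)e'_i(t)$, with smooth coefficients $c^i$ that are not all zero.

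In $M$, fix an orthonormal frame $(e_i)$ at $p$ of the same signature type. Solve the ``development'' ODE system
\[
\dot\gamma(t)=c^i(t)e_i(t), \qquad \nabla_{\dot\gamma}e_i=0, \qquad \gamma(t_0)=p.
\]
In local coordinates this is a first-order ODE in $(\gamma,e_i)$, so it has a unique smooth solution on some interval $I\ni t_0$.

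Shrinking $I$, $\gamma$ is an embedded curve, since $\dot\gamma\neq0$ and an immersion is locally an embedding. By construction, the components of $\dot\gamma$, and of $\nabla_{\dot\gamma}\dot\gamma=\dot c^i e_i$, in the parallel frames agree on both sides.

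\textbf{Step 2: Fermi-type charts.} Choose smooth functions $b_a^i(t)$, $a=1,\dots,n-1$, such that the vectors $c^i(t)$ and $b^i_a(t)$ form a basis of $\mathbb{R}^n$ for $t$ near $t_0$. Set $E_a=b^i_a e_i$ and $E'_a=b^i_a e'_i$. Define
\[
\Psi(t,y)=\exp_{\gamma(t)}\big(y^aE_a(t)\big), \qquad \Psi'(t,y)=\exp_{\gamma'(t)}\big(y^aE'_a(t)\big).
\]
At $y=0$ the differential of each map sends the coordinate basis to the basis $(\dot\gamma,E_a)$, respectively $(\dot\gamma',E'_a)$. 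By the inverse function theorem, each is a diffeomorphism from a neighbourhood of $\{t_0\}\times\{0\}$ onto neighbourhoods $U$, $U'$. Shrinking $I$ again, I set $\varphi=\Psi\circ\Psi'^{-1}$. Then $\varphi\circ\gamma'=\gamma$ and $\varphi(p')=p$.

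\textbf{Step 3: compare $g$ and $\nabla$ along the curve.} I will compare them through their coordinate components in $(t,y)$ at $y=0$.

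\emph{Metric.} The coordinate fields there are $\partial_t=c^ie_i$ and $\partial_a=b^i_ae_i$, with orthonormal $e_i$. So $g_{\mu\nu}$ is given by the same functions of $t$ on both sides, and $\varphi_*g'=g$ along the curve.

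\emph{Connection.} The Christoffel symbols are read off from the four types of covariant derivative:
\begin{itemize}
\item $\nabla_{\partial_t}\partial_t=\dot c^ie_i$, because $e_i$ is parallel.
\item $\nabla_{\partial_t}\partial_a=\dot b^i_ae_i$.
\item $\nabla_{\partial_a}\partial_t=\nabla_{\partial_t}\partial_a$, by torsion-freeness.
\item $\nabla_{\partial_a}\partial_b=0$ at $y=0$. Each line $s\mapsto\Psi(t,sy)$ is a geodesic, so $y^ay^b\nabla_{\partial_a}\partial_b=0$ there for all $y$; with the symmetry of $\nabla_{\partial_a}\partial_b$ this forces each term to vanish.
\end{itemize}
All of these have identical frame components on the two sides. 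Since the frames are related by the same coefficients, the Christoffel symbols coincide, giving $\varphi_*\nabla'=\nabla$ along $\gamma[I]\cap U$.

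I expect the main obstacle to be Step 1 together with the domain bookkeeping. One must check that the development ODE is genuinely well-posed (the frame is transported along the curve being constructed), and that one can shrink $I$ consistently so that $\gamma$ is embedded and both exponential charts are diffeomorphisms on a common domain containing $I\times\{0\}$. Everything else is a routine Fermi-coordinate computation.
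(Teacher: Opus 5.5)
The paper gives no proof of this statement. It quotes the result from Fletcher and Weatherall, and adds only the remark that in the time-oriented case $\varphi$ can be chosen orientation-preserving along the curve. Your proposal therefore has to stand as a self-contained proof, and it is correct.

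\textbf{Why the choice of $\gamma$ is forced.} Taking $\gamma$ to be the development of $\gamma'$ is not just convenient. Covariant differentiation along a curve sees only the Christoffel symbols on the curve. So any $\varphi$ with $\varphi_*g'=g$ and $\varphi_*\nabla'=\nabla$ there carries orthonormal parallel frames along $\gamma'$ to orthonormal parallel frames along $\varphi\circ\gamma'$. The parallel-frame components of the velocity are therefore invariant, which is why the theorem can only assert that some $\gamma$ exists rather than prescribe it.

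\textbf{The key checks.} In Step 3, two points carry the weight, and both are right:
\begin{itemize}
\item $\nabla_{\partial_t}\partial_t$ and $\nabla_{\partial_t}\partial_a$ at $y=0$ involve only derivatives along $\gamma$.
\item The polarization argument gives $\Gamma^\mu_{ab}(t,0)=0$ from the radial geodesics $s\mapsto\Psi(t,sy)$.
\end{itemize}
The bookkeeping you flag is harmless. The development is a smooth non-autonomous first-order system in $(\gamma,e_i)$. A product domain $I\times B$ inside both inverse-function-theorem neighbourhoods then makes $\gamma=\Psi(\cdot,0)$ automatically embedded and $\varphi=\Psi\circ\Psi'^{-1}$ a diffeomorphism.

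\textbf{What the explicit route buys over the citation.} First, the orientation remark becomes immediate. Choose $e_0(p)$ and $e'_0(p')$ future-directed and the two frames equally oriented; then $d\varphi$ sends $e'_i(t)$ to $e_i(t)$ along the curve. Second, it shows exactly what a version along a whole compact segment requires. The development must exist and be injective on the whole interval; after that, a tubular-neighbourhood argument makes $\Psi$ and $\Psi'$ injective on a thin tube.

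This second point matters because the proof of Theorem \ref{thm:SEPforhadamard} uses the result along an entire geodesic segment from $x$ to $x'$, which is more than the local statement gives. In that application $M$ is Minkowski space, and the development of a geodesic is a straight segment. So your construction can be run along the whole segment in that case.
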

We note that, in the case of time-oriented manifolds, the local diffeomorphism in theorem \ref{thm:FletcherWeatherall} can always be chosen so as to be orientation-preserving (along $\gamma$). We then have the following result:
\begin{thm}\label{thm:SEPforhadamard}
    Let $\mathscr{E}:\mathcal{G}_n^+\rightarrow \mathcal{FB}$ be a natural vector bundle, $\mathscr{L}$ a natural Green hyperbolic Lagrangian thereon, and let $\Pi$ be an assignment (not necessarily functorial), to every object $(M,g)$ in $\mathcal{G}_n^+$, of a distributional section of $\mathscr{E}(M,g)\boxtimes\mathscr{E}(M,g)$. Further suppose that $\mathrm{WF}(\Pi)$ satisfies SEP, in the sense discussed above. Let $(M,g)$ be Minkowski spacetime. Then if $\mathrm{WF}(\Pi(M,g))=\{(x,k;x',-k')\in\mathcal{N}\times\mathcal{N}:(x,k)\sim(x',k')\;\mathrm{and}\; k\in \mathcal{N}^+\}$, where $\sim$ means ``there exists a geodesic from $x$ to $x'$ with cotangent vector $k$ at $x$ and $k'$ at $x'$, the same is true, at least locally, for any other globally hyperbolic, time-oriented $(M',g')$.

    Moreover, the same is true if we replace $\mathcal{N}$ with $\mathcal{J}$ and $\mathcal{N}^+$ with $\mathcal{J}^+$, and if  $\Pi$ is of positive type at each spacetime and its antisymmetric part is the causal propagator (for $\mathscr{L}$), the result holds globally.
\end{thm}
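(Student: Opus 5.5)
The plan is to transport the Minkowski wavefront set to $(M',g')$ fibre by fibre using Definition~\ref{def:SEP2}, with Theorem~\ref{thm:FletcherWeatherall} supplying the local diffeomorphisms.

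\emph{Step 1: diagonal fibres.} Fix $p'\in M'$ and any $p$ in Minkowski spacetime. By Theorem~\ref{thm:FletcherWeatherall}, applied with a trivial curve, there is a local orientation-preserving diffeomorphism $\varphi$ with $\varphi(p')=p$, $\varphi_*g'(p)=\eta(p)$ and $\varphi_*\nabla'(p)=\nabla^\eta(p)$. Since $\mathrm{WF}(\Pi)$ satisfies SEP, its fibre over $(p',p')$ is the pullback under $\varphi$ of the Minkowski fibre over $(p,p)$.

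In Minkowski spacetime, the fibre of $\mathrm{WF}(\Pi)$ over $(p,p)$ is $\{(k,-k): k\in\mathcal{N}^+_p\}$, because the only geodesic from $p$ to itself with cotangent $k$ is the constant one. Now $\varphi$ is an isometry at $p'$ and is orientation preserving. Hence it carries $\mathcal{N}^+_p$ to $\mathcal{N}'^+_{p'}$ and commutes with $k\mapsto -k$. The diagonal fibre of $\mathrm{WF}(\Pi(M',g'))$ therefore has exactly the required form.

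\emph{Step 2: off-diagonal points, locally.} Take $(x',y')$ with $x'\neq y'$ close together and joined by a null geodesic $\gamma'$. Apply Theorem~\ref{thm:FletcherWeatherall} to $\gamma'$. This gives a curve $\gamma$ in Minkowski spacetime and a diffeomorphism $\varphi$ on neighbourhoods $U'\ni x',y'$ that matches the metric and connection along $\gamma'$.

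Because the Christoffel symbols agree along $\gamma'$, the geodesic equation is preserved. So $\gamma=\varphi\circ\gamma'$ is a Minkowski null geodesic, and cotangent vectors parallel transported along $\gamma'$ map to the corresponding transported covectors along $\gamma$. Hence $\varphi$ preserves the relation $\sim$ for pairs lying on $\gamma'$.

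Definition~\ref{def:SEP2} as stated only constrains fibres over $(p,\dots,p)$. I would therefore use the natural-bundle structure to extend it, reading SEP as applying to fibres over pairs of points along the matched curve; this is the natural $k$-point analogue. Granting this, $\mathrm{WF}(\Pi)$ over $(x',y')$ equals the pullback of the Minkowski set over $(\varphi(x'),\varphi(y'))$. This yields the claimed form for every pair in a neighbourhood of the diagonal, which is the ``locally'' clause. For pairs not joined by any null geodesic, I would instead match the metric to first order at the two separate points (by Theorem~\ref{thm:FletcherWeatherall} along any connecting curve) and conclude that the fibre is empty.

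\emph{Step 3: timelike variant and global statement.} Replacing $\mathcal{N}$ by $\mathcal{J}$ needs no change, since orientation-preserving pointwise isometries preserve $\mathcal{J}^+$ as well. For the global claim, use positivity and the condition that the antisymmetric part of $\Pi$ is $i\Delta_P$. Together with the local wavefront set from Step 2, these are the hypotheses of Radzikowski's local-to-global theorem: a positive-type bisolution with the correct antisymmetric part whose wavefront set has Hadamard form near the diagonal has it globally.

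The bisolution property itself is not among the hypotheses. I would recover it from propagation of singularities applied to $\Pi-\Pi^*$, or else absorb it into ``positive type'' as part of Radzikowski's setup; this point needs care. Alternatively, one can argue via Theorem~\ref{thm:operatorstohadamard}: the local wavefront set plus the antisymmetry condition give WFSSC, and propagation of singularities along bicharacteristics then extends it.

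\textbf{Main obstacle.} I expect the main difficulty to be Step 2. It requires justifying that SEP, formulated for fibres over diagonal points, legitimately controls the wavefront set over pairs of distinct points. Doing so means strengthening Definition~\ref{def:SEP2} to curve-matched diffeomorphisms, and showing that matching to first order along $\gamma'$ suffices to preserve the geodesic relation $\sim$.
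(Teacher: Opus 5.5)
Your proposal is correct and follows the paper's own argument. You apply Theorem~\ref{thm:FletcherWeatherall} along the geodesic joining the two points in a normal neighbourhood, and use the first-order matching along that curve to preserve null, future-directed covectors and the relation $\sim$. You read SEP in curve-matched form; the paper makes the same strengthening tacitly, asserting $\varphi_*(\mathrm{WF}(\Pi(M',g'))\cap\gamma'[I'])=\mathrm{WF}(\Pi(M,g))\cap\gamma[I]$ even though Definition~\ref{def:SEP2} only constrains diagonal fibres. You then invoke Radzikowski's local-to-global theorem, whose bisolution hypothesis the paper also leaves implicit. Your $\Pi-\Pi^*$ route cannot supply that hypothesis, since the antisymmetric part carries no information about $P$ acting on the symmetric part.

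One fix is needed. For pairs that are not null-related you must use the connecting \emph{geodesic}, as the paper does, rather than ``any connecting curve''. Only then is the image a non-null Minkowski geodesic, which is what guarantees that the Minkowski fibre over $(\varphi(x'),\varphi(y'))$ is empty.
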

\begin{proof}
    Let $p'\in M'$ and let $U'\subset M'$ be a (open subset of a) geodesically normal neighbourhood of $p$ (such always exist locally). We begin with the `$\supseteq$' direction. Let $(x,k;x',-k')\in\mathcal{N}\times\mathcal{N}$ restricted to $U'$ be such that $(x,k)\sim(x',k')$ and $k\in \mathcal{N}^+$. Let $\gamma':I'\rightarrow U'$ be a smooth embedded curve witnessing the relationship $(x,k)\sim(x',k')$. Fix any $p\in M$. Theorem \ref{thm:FletcherWeatherall} implies that there exists a neighbourhood $U$ of $p$, a smooth embedded curve $\gamma:I\rightarrow M$ whose image contains $p$ and a local diffeomorphism $\varphi:U'\rightarrow U$ such that $\varphi(p)=x$, $\varphi\circ\gamma'=\gamma$ on $I$, and along $\gamma[I]$, $\varphi_*g'=g$, $\varphi_*\nabla'=\nabla$, and which preserved time orientation there (that this is in fact the case for the whole of $\gamma[I]$ follows since $U'$ is a geodesically normal neighbourhood---see \textcite[91]{Iliev2006}). Since $\mathrm{WF}(\Pi)$ satisfies SEP, we know that $\varphi_*(\mathrm{WF}(\Pi(M',g'))\cap\gamma'[I'])=\mathrm{WF}(\Pi(M,g))\cap\gamma[I]$. Moreover, since along $\gamma[I]$, $\varphi_*g'=g$ and $\varphi_*\nabla'=\nabla$, it follows that $(\varphi(x),\varphi_*k;\varphi(x'),-\varphi_*k')\in\mathcal{N}\times\mathcal{N}$ and  $(\varphi(x),\varphi_*k)\sim(\varphi(x'),\varphi_*k')$ and $\varphi_*k\in \mathcal{N}^+$. So $(\varphi(x),\varphi_*k;\varphi(x'),-\varphi_*k')\in\mathrm{WF}(\Pi(M,g))$ and hence $(x,k;x',-k')\in\mathrm{WF}(\Pi(M',g'))$.

    For the `$\subseteq$' direction, suppose that $(x,k;x',-k')\in \mathrm{WF}(\Pi(M',g'))$ restricted to $U'$. If $(x,k;x',-k')\in\mathcal{N}\times\mathcal{N}$, $(x,k)\sim(x',k')$, and $k\in \mathcal{N}^+$ then we are done, so suppose otherwise. Since $U'$ is geodesically normal, there exists a geodesic $\gamma':I'\rightarrow U'$ whose image contains $x$ and $x'$. Let $p$, $\gamma$, and $\varphi$ be defined subject to the same conditions as before. Again using that $\varphi\circ\gamma'=\gamma$ on $I$, and $\varphi_*g'=g$, $\varphi_*\nabla'=\nabla$ along $\gamma[I]$, we must have  $(\varphi(x),\varphi_*k;\varphi(x'),-\varphi_*k')\notin\mathcal{N}\times\mathcal{N}$ or  $(\varphi(x),\varphi_*k)$ and $(\varphi(x'),\varphi_*k')$  are not null related, or $\varphi_*k\notin \mathcal{N}^+$. But since $\mathrm{WF}(\Pi)$ satisfies SEP, we know that $\varphi_*(\mathrm{WF}(\Pi(M',g'))\cap\gamma'[I'])=\mathrm{WF}(\Pi(M,g))\cap\gamma[I]$, and so $(\varphi(x),\varphi_*k;\varphi(x'),-\varphi_*k')\in\mathrm{WF}(\Pi(M,g))$, contradicting the assumption that $\mathrm{WF}(\Pi(M,g))=\{(x,k;x',-k')\in\mathcal{N}\times\mathcal{N}:(x,k)\sim(x',k')\;\mathrm{and}\; k\in \mathcal{N}^+\}$.

    Moreover, the same would clearly follow if we replace $\mathcal{N}$ with $\mathcal{J}$ and $\mathcal{N}^+$ with $\mathcal{J}^+$ throughout. The remainder of the result follows from the local-to-global theorem of \textcite{Radzikowski1996}.
\end{proof}
In other words, given any assignment to every (globally hyperbolic, time-oriented) spacetime $(M,g)$ of a distributional section of $\mathscr{E}(M,g)\boxtimes\mathscr{E}(M,g)$ (i.e., candidate two-point function) whose wavefront set satisfies SEP, if the $\mathcal{V}=\mathcal{N}$ WFSSC holds at Minkowski spacetime, it holds, at least locally, at every other (globally hyperbolic, time-oriented) spacetime. As such, theorem \ref{thm:SEPforhadamard} provides a precise way of making sense of the idea that the Hadamard condition can be motivated via appeal to (a version of) the equivalence principle, applied to the singular structure of the Minkowski vacuum two-point function. But one might be tempted to go further, and see theorem \ref{thm:SEPforhadamard} as providing a kind of \textit{post hoc} justification for the Hadamard condition, via the equivalence principle. There are two reasons to resist this move. The first is that the motivation for the antecedent of theorem \ref{thm:SEPforhadamard} is unclear. And the second is that theorem \ref{thm:SEPforhadamard} is not sufficient.

Why is the motivation for the antecedent of theorem \ref{thm:SEPforhadamard} unclear? There are several reasons. The first has to do with the condition that $\mathrm{WF}(\Pi)$ satisfies SEP. To see why the motivation for this condition is unclear, let me begin with one thing which you might have thought would motivate the condition that $\mathrm{WF}(\Pi)$ satisfies SEP, but it turns out cannot work, at least if it is to be taken seriously. Suppose that we wanted $\Pi$ itself to satisfy SEP. Then, of course, we would also expect $\mathrm{WF}(\Pi)$ to satisfy SEP. The problem with this is that (a) it is not clear what it would mean for $\Pi$ to satisfy SEP, and in any case, what does seem clear is that (b) on any plausible way of making sense of what it is for $\Pi$ to satisfy SEP, if $\Pi$ is Hadamard (at each $(M,g)$), it will not satisfy it. I will now spell out both (a) and (b) in some detail.

Beginning with (a), let us assume (for the moment---I will return to this assumption in my discussion of (b)) that the assignment $\Pi$ is functorial, so that we can make sense of the pushforward of $\Pi(M,g)$ along isometric embeddings. Then, I claim, it is not clear what it would mean for $\Pi$ to satisfy SEP. This is because SEP concerns a property of certain functorial objects (fibre bundles, geometric object fields, or whatever) which holds \textit{at a point}---\textit{viz.}\ agreement at that point whenever two metrics agree to first order there---and there is no unambiguous, straightforward definition of what it is for two distributions to `agree at a point' (or, what amounts to the same thing, for their difference to `vanish at a point').\footnote{In fact, one can show that the pullback of a distribution $u$ on $M$ to the $0$-dimensional manifold along some (smooth) embedding---which is perhaps the most obvious way of making sense of `the value of a distribution at a point'---is well-defined iff the image of that embedding lies outside of the singular support of $u$ \parencite[theorem 8.2.4]{Hoermander1}.} At best, one can make sense of what it is for two distributions to agree in some neighbourhood of a point, but this will not help, since precisely the reason SEP is so powerful (and the property invoked in the proof of theorem \ref{thm:SEPforhadamard}) is that \textit{any} two metrics can be made to agree to first order at a point (or more generally, locally along the image of some curve) by the action of some local diffeomorphism, whereas the same is not in general true in arbitrarily small neighbourhoods of that point (unless they are locally isometric). Now, one might reply that, rather than a notion of exact agreement, what is needed here is a notion of what it is for two distributions to `locally approximate' one another in the neighbourhood of some point, e.g., as measured by some family of test functions, where the degree of approximation improves as the support of those test functions decreases, or considered as an appropriate scaling limit of some family of (smooth) sections which agree (for each value of the parameterization) at a point. I do not know if this can be achieved (though I am optimistic that it could be), but in any case, it is not clear that a variant of SEP for $\Pi$, explicated in this way, could underwrite the condition that $\mathrm{WF}(\Pi)$ satisfies SEP in the sense required by theorem \ref{thm:SEPforhadamard}.

This brings us to (b). Suppose we ignore questions of the order of dependence of $\Pi$ on the metric, and focus instead on the requirement that $\Pi$ be functorial, i.e., that it depend only on the metric (without specifying to which order). (This would be a strict weakening of SEP.) Then, if $\Pi$ is Hadamard at each $(M,g)$, it is not functorial (over $\mathcal{G}_n^+$). To see this, it suffices to observe that if $\Pi$ were functorial and Hadamard at each $(M,g)$, then since any Hadamard distribution uniquely determines a corresponding Hadamard state on, say, the $*$-algebra of regular on-shell polynomial observables $\mathrm{PolyObs}_\mathrm{reg}(\mathscr{E},\mathscr{P})(M,g)$ with the star product induced by the causal propagator (which certainly defines a locally-covariant QFT---see, e.g., \textcite[theorem 3.8]{Fewster2025Hadamard}), there would exist a functorial assignment of states (for these algebras) to objects in $\mathcal{G}_n^+$, i.e., a natural state, which is impossible in general \parencite{FewsterVerch2012localitycovariance,Fewster2018states}.\footnote{Strictly speaking, these theorems only establish a conflict between the existence of a natural state for a locally-covariant QFT, satisfying various properties, and non-triviality of that theory; these conditions are known to be satisfied for standard QFTs of interest, e.g., Klein--Gordon theory.}

There is a kind of flat-footed response to both these problems, which is to suggest that perhaps having a wavefront set which satisfies SEP is the best one can do by way of making sense of the idea that $\Pi$ `almost' satisfies SEP, or is `as close to satisfying SEP as possible'. Perhaps. But this raises a deeper worry: namely, what's so special about singular structure? If the aim of the game is just to find \textit{some} SEP-like condition which can be satisfied by the assignment $\Pi$, even when $\Pi$ is distributional and not-necessarily-functorial, then the condition $\mathrm{WF}(\Pi)$ satisfies SEP begins to look \textit{ad hoc}. On the other hand, suppose that we have antecedent reasons for thinking that singular structure \textit{is} somehow special, so that it matters if $\mathrm{WF}(\Pi)$ satisfies SEP, but we should not care about whether $\Pi$ itself does. Certainly one reason for thinking this would be if we are interested in using $\Pi$ to define various quantities which involve taking products of distributions, and we expect the class of allowed wavefront sets for the other distributional quantities entering into these products to also satisfy SEP. But if this is the case, then one has to be very careful that it not collapse into a variant of the argument given in \S\ref{sec:operatorstohadamard}, and thus eliminate the need for SEP altogether. I will return to this point below.

Another response, which I think is more promising, would be to suggest that rather than considering a possibly non-functorial assignment of distributional sections of $\mathscr{E}\boxtimes\mathscr{E}$ to spacetimes, we should consider instead a functorial assignment of a \textit{space} of distributional sections of $\mathscr{E}\boxtimes\mathscr{E}$ to spacetimes, all of which share the same wavefront set (at each spacetime).\footnote{I am grateful to Jim Weatherall for pressing me on this point.} And indeed this is possible \parencite{Fewster2025Hadamard}. But again, this raises the question why singular structure is `special'---this time in a slightly different form---namely, why it is that the wavefront set of this space of distributional sections is to be held fixed (at each spacetime), rather than anything else? 

This brings us to the rest of the antecedent of theorem \ref{thm:SEPforhadamard}. At a first pass, the issue here is that it is not clear what underwrites the assumption that $\mathrm{WF}(\Pi(M,g))=\{(x,k;x',-k')\in\mathcal{N}\times\mathcal{N}:(x,k)\sim(x',k')\;\mathrm{and}\; k\in \mathcal{N}^+\}$ on Minkowski spacetime. On the `standard story' (recall \S\ref{sec:equivalence}), this is supposed to be because the two-point function of the Minkowski vacuum state (of the Klein--Gordon field) is $\{(x,k;x',-k')\in\mathcal{N}\times\mathcal{N}:(x,k)\sim(x',k')\;\mathrm{and}\; k\in \mathcal{N}^+\}$. Now, if the aim were to seek a natural extension of the Minkowski vacuum state to arbitrary (globally hyperbolic, time-oriented) relativistic spacetimes, then this would clearly have some force. But, as emphasized already, no such state exists. As such, it is unclear why the wavefront set of the Minkowski vacuum two-point function is relevant to what we should take the wavefront set of $\Pi$ to be on Minkowski spacetime.

At this point, one might be tempted to respond that what motivates the assumption that $\mathrm{WF}(\Pi(M,g))=\{(x,k;x',-k')\in\mathcal{N}\times\mathcal{N}:(x,k)\sim(x',k')\;\mathrm{and}\; k\in \mathcal{N}^+\}$ in Minkowski spacetime is not the fact that it is the wavefront set of the Minkowski vacuum two-point function \textit{per se}, but rather that it is in virtue of satisfying this condition that the Minkowski vacuum two-point function has such-and-such desirable properties, and it is these properties which we would like $\Pi$ to reproduce. I do not have any objection to this kind of strategy (indeed, I think it is basically correct, and is closely related to the one adopted in \S\ref{sec:operatorstohadamard}), but again, one has to be very careful in pursuing it not to collapse the need for the equivalence principle altogether. That is, if the argument for the assumption that $\mathrm{WF}(\Pi(M,g))=\{(x,k;x',-k')\in\mathcal{N}\times\mathcal{N}:(x,k)\sim(x',k')\;\mathrm{and}\; k\in \mathcal{N}^+\}$ on Minkowski spacetime is supposed to be that it is in virtue of satisfying this condition that the Minkowski vacuum has such-and-such properties, why not apply the same argument \textit{directly} to each globally hyperbolic, time-oriented spacetime $(M,g)$, without the need for the detour via SEP?\footnote{To repeat, this is not to deny that the equivalence principle might have an important heuristic role to play here, e.g., in giving us some guidance on how to carry out the generalization---my point has to do with what can be said by way of justification for imposing the Hadamard condition after the fact.}

There is another problem lurking in the background here. Suppose that, for whatever reason, we expect the antecedent of theorem \ref{thm:SEPforhadamard} to be satisfied in the context of some specific matter theory, e.g., Klein--Gordon theory. It does not follow that we should expect the antecedent of theorem \ref{thm:SEPforhadamard} to be satisfied for any matter theory of physical interest whatsoever. To give some examples: if $(E,P)$ is a Green hyperbolic field theory on some $(M,g)$ which is $\mathcal{V}^\pm$-decomposable but not natural (over $\mathcal{G}_n^+$), or even if it is natural, but does not satisfy SEP, then there is no reason to expect that $\mathrm{WF}(\Delta_P)$ will satisfy SEP, and \textit{a fortiori} that $\mathrm{WF}(\Pi)$ will do so (if $\Pi$ is Hadamard for $P$). One might reply that the class of theories for which we do expect the antecedent of theorem \ref{thm:SEPforhadamard} to hold is precisely the class of theories we should consider `physically reasonable', and it is in these cases that the Hadamard condition has bite. Quite apart from the fact that it is not clear that all theories which we do think of as `physically reasonable' (e.g., Yang--Mills theory) fall into this class, I will just note that if this is right, then the sense in which Hadamard condition is a necessary condition on `physically reasonable' states of the quantum field is rather different---and more attenuated---than the way it is standardly presented in the literature.

I will now move on to discuss the worry that theorem \ref{thm:SEPforhadamard} is not sufficient, which can be dealt with rather more briefly, given what we have already done. Let us suppose that, for whatever reason, we would like the antecedent of theorem \ref{thm:SEPforhadamard} to be satisfied. Then this still does not guarantee that $\Pi$ is Hadamard (at each $(M,g)$)---one also has to provide some argument for the remaining three properties defining a Hadamard distribution, namely, bisolution, positive semi-definiteness, and that $\Pi^{[ab]}=\Delta_P^{ab}$.\footnote{There is also the local-to-global issue, but this can be dealt with providing we have some argument for imposing positive semi-definiteness and $\Pi^{[ab]}=\Delta_P^{ab}$ \parencite{Radzikowski1996}.} Moreover, there is reason to think that in doing so, one will run into many of the same problems encountered in the discussion of the antecedent of theorem \ref{thm:SEPforhadamard}. For example, one motivation for these conditions would be if we would like $\Pi$ to define a state on the algebra of on-shell polynomial (regular or microcausal) observables at each object $(M,g)$ in $\mathcal{G}_n^+$, but as we have already seen, this is somewhat in tension with the idea of considering an \textit{assignment} $\Pi$ of $\mathscr{E}M$-valued bi-distributions to objects $(M,g)$ in $\mathcal{G}_n^+$, because natural states do not in general exist. More generally, there is no hope of using a similar proof strategy to theorem \ref{thm:SEPforhadamard} to establish these properties, given that $\Pi$ itself cannot be functorial over $\mathcal{G}_n^+$ (even supposing that $\mathscr{P}$ is). And if the motivation for bisolution, positive semi-definiteness, and $\Pi^{[ab]}=\Delta_P^{ab}$ has something to do with properties of the operator product induced by $\Delta_P$ on $\mathrm{PolyObs}_\mathrm{reg}(\mathscr{E},\mathscr{P})$, we are back to the worry about collapsing into a variant of the argument given in \S\ref{sec:operatorstohadamard}. 

Having said all of that, let me end on a more positive note. Ultimately, I think that many of the concerns raised in this section about the equivalence principle as a justification for the Hadamard condition could be resolved with new technical work---in particular, by providing a statement of SEP appropriate for distributional sections, and proving various results to the effect of, if $(\mathscr{E}, \mathscr{L})$ is an SEP-satisfying Green hyperbolic field theory, then e.g., its causal propagator will satisfy SEP, if it is $\mathcal{V}^+$-decomposable, it will be so for such-and-such possible choices of $\mathcal{V}^+$ (presumably, $\mathcal{V}^+=\mathcal{N}^+$ or $\mathcal{V}^+=\mathcal{J}^+$), its associated space of Hadamard distributions at each spacetime will also satisfy SEP (in a sense appropriate for spaces of distributions), and so on. All of this would be a further contribution to the project of articulating clearly the relationship between SEP and the Hadamard condition, as well as foundationally significant in its own right. The advantage of theorem \ref{thm:SEPforhadamard} is to highlight exactly where this technical work is needed, which problems we should expect it to resolve, and which we should not. 

\section{Close}\label{sec:conclusion}

In this paper, I have argued that the Hadamard condition is best understood as a necessary and sufficient condition for the existence of a well-defined operator on a sufficiently large space of observables of the quantum field, satisfying a variety of further conditions, proving a converse to a result discussed by \textcite{Duetsch,FredenhagenRejzner2015} in this context. I have critically assessed two `standard' motivations for the Hadamard condition found in the literature, and explained how my preferred argument for the Hadamard condition avoids the problems facing both (especially in clarifying the interpretation of Hadamard states as `vacuum-like' and their relationship to well-definedness of physical quantities such as Wick polynomials and the expectation value of the stress-energy operator); I have also used the tools developed in \parencite{March2025BPhil} to provide a precise theorem relating the equivalence principle to the Hadamard condition, and discussed the limitations of this theorem understood as providing a justification for the Hadamard condition via the equivalence principle, as well as what technical developments could be used to address these limitations.

Along the way, I have also highlighted several interesting directions for future work---in particular, in extending the statement of SEP given by \textcite{March2025BPhil} to distributional quantities, and exploring the relationships between a system of Green hyperbolic equations satisfying SEP and various structural features of its associated quantum field theory. Another open question, which I have touched on less, would be to relate the discussion in this paper to the argument for the Hadamard condition given in \textcite{FewsterVerch2013}. These will have to wait for another day.

\section*{Acknowledgments}
This material is based upon work supported by the National Science Foundation under Grant No.\ 2419967. I would like to thank Jeff Barrett, India Bhalla-Ladd, Silvester Borsboom, Chris Fewster, James Read, Frank Hu, Dominic Ryder, Jim Weatherall, and audiences in Lake Bohinj and Irvine for helpful comments and discussions. 

\printbibliography

@book{Hoermander1,
    author = {Lars H\"ormander},
    title = {The Analysis of Linear Partial Differential Operators I: Distribution Theory and Fourier Analysis},
    publisher = {Springer},
    year = {2003},
    place = {Berlin, Heidelberg},
    doi = {10.1007/978-3-642-61497-2}
}

@article{Fewster2025Hadamard,
    author = {Fewster, Christopher J.},
    title = {Hadamard States for Decomposable Green-Hyperbolic Operators},
    journal = {Communications in Mathematical Physics},
    year = {2025},
    volume = {407},
    number = {1},
    doi = {10.1007/s00220-025-05512-1}
}

@article{FewsterVerch2012localitycovariance,
    author = {Fewster, Christopher J. and Verch, Rainer},
    title = {Dynamical Locality and Covariance: What Makes a Physical Theory the Same in all Spacetimes?},
    journal = {Annales Henri Poincar\'e},
    year = {2012},
    pages = {1613--1674},
    volume = {13},
    number = {7},
    doi = {10.1007/s00023-012-0165-0}
}

@article{Fewster2018states,
    author = {Fewster, Christopher J.},
    title = {The Art of the State},
    journal = {International Journal of Modern Physics D},
    volume = {27},
    number = {11},
    year = {2018},
    doi = {10.1142/S0218271818430071}
}

@misc{March2025BPhil,
    author = {Eleanor March},
    title = {Minimal Coupling, the Strong Equivalence Principle, and the Adaptation of Matter to Spacetime Geometry},
    year = {2025},
    note = {BPhil Thesis, University of Oxford},
    doi = {10.5287/ora-agd8vvyd8}
}

@article{FletcherWeatherallpart1,
    author = {Samuel C. Fletcher and James Owen Weatherall},
    title = {The Local Validity of Special Relativity, Part 1: Geometry},
    journal = {Philosophy of Physics},
    year = {2023},
    volume = {1},
    number = {1},
    doi = {10.31389/pop.6}
}

@article{FletcherWeatherallpart2,
    author = {Samuel C. Fletcher and James Owen Weatherall},
    title = {The Local Validity of Special Relativity, Part 2: Matter Dynamics},
    journal = {Philosophy of Physics},
    year = {2023},
    volume = {1},
    number = {1},
    doi = {10.31389/pop.7}
}

@book{Brown2005,
    author = {Harvey R. Brown},
    title = {Physical Relativity: Space-time structure from a dynamical perspective},
    publisher = {OUP},
    year = {2005},
    place = {Oxford},
    doi = {10.1093/0199275831.001.0001}
}

@article{Readetal2018,
    title = {Two miracles of general relativity},
    journal = {Studies in History and Philosophy of Science Part B: Studies in History and Philosophy of Modern Physics},
    volume = {64},
    pages = {14--25},
    year = {2018},
    doi = {10.1016/j.shpsb.2018.03.001},
    author = {James Read and Harvey R. Brown and Dennis Lehmkuhl},
}

@article{GhinsBudden2001,
    title = {The Principle of Equivalence},
    journal = {Studies in History and Philosophy of Science Part B: Studies in History and Philosophy of Modern Physics},
    volume = {32},
    number = {1},
    pages = {33-51},
    year = {2001},
    doi = {https://doi.org/10.1016/S1355-2198(00)00038-1},
    author = {Michel Ghins and Tim Budden}
}

@article{MarchWeatherallnaturaltheories,
    author = {Weatherall, James Owen and March, Eleanor},
    title = {Natural Theories},
    journal = {The British Journal for the Philosophy of Science},
    year = {Forthcoming},
    doi = {10.1086/740611}
}

@article{MarchWeatherallcovgauge,
    author = {March, Eleanor and Weatherall, James Owen},
    title = {A puzzle about general covariance and gauge},
    journal = {The British Journal for the Philosophy of Science},
    year = {Forthcoming},
    doi = {10.1086/736478}
}

@book{Wald1994,
    author = {Robert M. Wald},
    title = {Quantum Field Theory in Curved Spacetime and Black Hole Thermodynamics},
    publisher = {University of Chicago Press},
    year = {1994},
    place = {Chicago, IL}
}

@article{Wald1977,
    author = {Robert M. Wald},
    title = {The back reaction effect in particle creation in curved spacetime},
    journal = {Communications in Mathematical Physics},
    year = {1977},
    volume = {54},
    pages = {1--19},
    number = {1},
    doi = {10.1007/BF01609833}
}

@article{Wald1978,
    title = {Trace anomaly of a conformally invariant quantum field in curved spacetime},
    author = {Wald, Robert M.},
    journal = {Physical Review D},
    volume = {17},
    number = {6},
    pages = {1477--1484},
    year = {1978},
    doi = {10.1103/PhysRevD.17.1477}
}

@article{KayWald1991,
    title = {Theorems on the uniqueness and thermal properties of stationary, nonsingular, quasifree states on spacetimes with a bifurcate killing horizon},
    journal = {Physics Reports},
    volume = {207},
    number = {2},
    pages = {49--136},
    year = {1991},
    doi = {10.1016/0370-1573(91)90015-E},
    author = {Bernard S. Kay and Robert M. Wald},
}

@article{JanssenVerch2023,
    doi = {10.1088/1361-6382/acb039},
    year = {2023},
    volume = {40},
    number = {4},
    author = {Janssen, Daan W. and Verch, Rainer},
    title = {Hadamard states on spherically symmetric characteristic surfaces, the semi-classical Einstein equations and the Hawking effect},
    journal = {Classical and Quantum Gravity}
}

@article{Ryder,
    author = {Dominic John Ryder},
    title = {The Black Hole Idealization Paradox},
    journal = {The British Journal for the Philosophy of Science},
    year = {Forthcoming},
    doi = {10.1086/734469},
}

@article{Fullingetal1978,
    author = {Fulling, Stephen A. and Sweeny, Mark and Wald, Robert M.},
    title = {Singularity structure of the two-point function in quantum field theory in curved spacetime},
    journal = {Communications in Mathematical Physics},
    year = {1978},
    pages = {257--264},
    volume = {63},
    number = {3},
    doi = {10.1007/BF01196934}
}

@article{FewsterVerch2013,
    doi = {10.1088/0264-9381/30/23/235027},
    year = {2013},
    volume = {30},
    number = {23},
    author = {Fewster, Christopher J. and Verch, Rainer},
    title = {The necessity of the Hadamard condition},
    journal = {Classical and Quantum Gravity}
}

@article{Amir-HomayoonOttewill,
    title = {Quantum states and the Hadamard form. III. Constraints in cosmological space-times},
    author = {Najmi, Amir-Homayoon and Ottewill, Adrian C.},
    journal = {Physical Review D},
    volume = {32},
    issue = {8},
    pages = {1942--1948},
    year = {1985},
    doi = {10.1103/PhysRevD.32.1942},
}

@inbook{KhavkineMoretti2015,
    author = {Khavkine, Igor and Moretti, Valter},
    editor = {Brunetti, Romeo and Dappiaggi, Claudio and Fredenhagen, Klaus and Yngvason, Jakob},
    title = {Algebraic QFT in Curved Spacetime and Quasifree Hadamard States: An Introduction},
    booktitle = {Advances in Algebraic Quantum Field Theory},
    year = {2015},
    publisher = {Springer International Publishing},
    address = {Cham},
    pages = {191--251},
    doi = {10.1007/978-3-319-21353-8_5}
}

@book{Duetsch,
    author = {Michael D\"utsch},
    title = {From Classical Field Theory to Perturbative Quantum Field Theory},
    publisher = {Birkh\"auser},
    place = {Cham},
    year = {2019},
    doi ={10.1007/978-3-030-04738-2}
}

@inbook{FredenhagenRejzner2015,
    author="Fredenhagen, Klaus and Rejzner, Katarzyna",
    editor="Calaque, Damien and Strobl, Thomas",
    title="Perturbative Algebraic Quantum Field Theory",
    booktitle="Mathematical Aspects of Quantum Field Theories",
    year="2015",
    publisher="Springer International Publishing",
    address="Cham",
    pages="17--55",
    doi={10.1007/978-3-319-09949-1_2}
}

@book{Iliev2006,
    title="Handbook of Normal Frames and Coordinates",
    year="2006",
    author = {Bozhidar Z. Iliev},
    publisher={Birkh\"auser},
    place="Basel",
    pages="1--71",
    doi={10.1007/978-3-7643-7619-2_1},
}

@book{Kolar+etal,
    author  =   {Ivan Kol\'a\v{r} and Peter W. Michor and Jan Slov\'ak},
    title   =   {Natural Operations in Differential Geometry},
    publisher   =   {Springer-Verlag},
    address =   {New York},
    year    =   {1993},
}

@article{Baer2015,
    author = {Christian B{\"a}r},
    title = {Green-Hyperbolic Operators on Globally Hyperbolic Spacetimes},
    journal = {Communications in Mathematical Physics},
    year = {2015},
    pages = {1586--1615},
    volume = {333},
    number = {3},
    doi = {10.1007/s00220-014-2097-7}
}

@article{Khavkine2014,
    author = {Khavkine, Igor},
    title = {Covariant phase space, constraints, gauge and the Peierls formula},
    journal = {International Journal of Modern Physics A},
    volume = {29},
    number = {5},
    pages = {1430009},
    year = {2014},
    doi = {10.1142/S0217751X14300099}
}

@article{Radzikowski,
    author = {Radzikowski, Marek J.},
    title = {Micro-local approach to the Hadamard condition in quantum field theory on curved space-time},
    journal = {Communications in Mathematical Physics},
    year = {1996},
    pages = {529--553},
    volume = {179},
    number = {3},
    doi = {10.1007/BF02100096}
}

@article{BrunettiFredenhagen2000,
    author = {Romeo Brunetti and Klaus Fredenhagen},
    title = {Microlocal Analysis and Interacting Quantum Field Theories: Renormalization on Physical Backgrounds},
    journal = {Communications in Mathematical Physics},
    year = {2000},
    pages = {623--661},
    volume = {208},
    number = {3},
    doi = {10.1007/s002200050004}
}

@article{Fewster2000,
    doi = {10.1088/0264-9381/17/9/302},
    year = {2000},
    volume = {17},
    number = {9},
    pages = {1897},
    author = {Christopher J. Fewster},
    title = {A general worldline quantum inequality},
    journal = {Classical and Quantum Gravity}
}

@article{FewsterVerch2002,
    author = {Fewster, Christopher J. and Verch, Rainer},
    title = {A Quantum Weak Energy Inequality for Dirac Fields in Curved Spacetime},
    journal = {Communications in Mathematical Physics},
    year = {2002},
    pages = {331--359},
    volume = {225},
    number = {2},
    doi = {10.1007/s002200100584}
}

@inbook{Fewster2017,
    author="Fewster, Christopher J.",
    editor="Lobo, Francisco S. N.",
    title="Quantum Energy Inequalities",
    booktitle="Wormholes, Warp Drives and Energy Conditions",
    year="2017",
    publisher="Springer International Publishing",
    place="Cham",
    pages="215--254",
    doi="10.1007/978-3-319-55182-1_10"
}

@article{HollandsWald2001,
    author = {Stefan Hollands and Robert M. Wald},
    title = {Local Wick Polynomials and Time Ordered Products of Quantum Fields in Curved Spacetime},
    journal = {Communications in Mathematical Physics},
    year = {2001},
    volume = {223},
    pages = {289--326},
    number = {2},
    doi = {10.1007/s002200100540}
}

@article{HollandsWald2002,
    author = {Stefan Hollands and Robert M. Wald},
    title = {Existence of Local Covariant Time Ordered Products of Quantum Fields in Curved Spacetime},
    journal = {Communications in Mathematical Physics},
    year = {2002},
    volume = {231},
    pages = {309--345},
    number = {2},
    doi = {10.1007/s00220-002-0719-y}
}

@article{DuetschFredenhagen2001,
    author = {Michael D\"utsch and Klaus Fredenhagen},
    title = {Algebraic Quantum Field Theory, Perturbation Theory, and the Loop Expansion},
    journal = {Communications in Mathematical Physics},
    year = {2001},
    pages = {5--30},
    volume = {219},
    numner = {1},
    doi = {10.1007/PL00005563}
}

@article{DuetschFredenhagen2001-2,
    author = {D\"utsch, Michael and Fredenhagen, Klaus},
    year = {2001},
    pages = {151--160},
    title = {Perturbative Algebraic Field Theory, and Deformation Quantization},
    volume = {30},
    journal = {Fields Institute Communications},
    doi = {10.1090/fic/030/09}
}

@article{Radzikowski1996,
    author = {Marek J. Radzikowski},
    title = {A local-to-global singularity theorem for quantum field theory on curved space-time [with an Appendix by Rainer Verch]},
    journal = {Communications in Mathematical Physics},
    year = {1996},
    pages = {1-22},
    volume = {180},
    number = {1},
    doi = {10.1007/BF02101180}
}

@inbook{Fletcher2020,
    author = {Fletcher, Samuel C.},
    editor = {Beisbart, Claus and Sauer, Tilman and W{\"u}thrich, Christian},
    title = {Approximate Local Poincar{\'e} Spacetime Symmetry in General Relativity},
    booktitle = {Thinking About Space and Time: 100 Years of Applying and Interpreting General Relativity},
    year = {2020},
    publisher = {Springer},
    address= {Cham},
    pages = {247-267},
    doi = {10.1007/978-3-030-47782-0_12}
}

@article{Weatheralldogmas,
    author = {James Owen Weatherall},
    title = {Two Dogmas of Dynamicism},
    journal = {Synthese},
    year = {2021},
    pages = {253-275},
    volume = {199},
    issue = {Suppl 2},
    doi = {10.1007/s11229-020-02880-0}
}

@article{Peierls1952,
    author = {Peierls, Rudolf Ernst},
    title = {The commutation laws of relativistic field theory},
    journal = {Proceedings of the Royal Society of London. A. Mathematical and Physical Sciences},
    volume = {214},
    number = {1117},
    pages = {143--157},
    year = {1952},
    doi = {10.1098/rspa.1952.0158},
}

\end{document}